\documentclass[a4paper,envcountsame]{llncs}

\usepackage[utf8]{inputenc}
\usepackage[english]{babel}
\usepackage{amsmath}
\usepackage{stmaryrd}
\usepackage{amsfonts}
\usepackage{amssymb}
\usepackage{lmodern}
\usepackage{tikz-cd}
\usepackage[all]{xy}
\usepackage{algorithm}
\usepackage{algpseudocode}
\usepackage{float}
\usepackage{tikz}
\usepackage{wrapfig}
\usepackage[inline]{enumitem}
\usetikzlibrary{positioning,shapes,arrows,automata}
\tikzstyle{state}=[draw,ellipse]

\bibliographystyle{plain} 

\pagestyle{plain}

\hyphenation{endo-functor}

\newcommand{\Sub}{\mathsf{Sub}}
\newcommand{\Pow}{\mathcal{P}}
\newcommand{\Cat}[1]{\mathcal{#1}}
\newcommand{\Set}{\mathsf{Set}}
\newcommand{\BA}{\mathsf{BA}}

\newcommand{\lfp}{\mathsf{lfp}}

\newcommand{\B}{B}
\newcommand{\id}{\mathsf{id}}

\newcommand{\theory}[2]{\mathit{th}_{#1}^{#2}}
\newcommand{\close}{\mathsf{close}}
\newcommand{\D}{\delta}
\newcommand{\xvert}[2]{\vert #1 \vert_{#2}}

\newcommand{\inl}{\mathit{inl}}
\newcommand{\inr}{\mathit{inr}}
\newcommand{\inlv}{\inl_{\vee}}
\newcommand{\inrv}{\inr_{\vee}}

\newcommand{\Sem}[1]{\llbracket #1 \rrbracket}
\newcommand{\diam}[1]{\left\langle #1 \right\rangle}

\newcommand{\counter}{\mathsf{counter}}

\newcommand{\Pred}{P}
\newcommand{\Q}{Q}

\newcommand{\op}{\mathsf{op}}

\newtheorem{assumption}[theorem]{Assumption}

\title{Coalgebra Learning via Duality}
\titlerunning{Learning via Duality}
\author{Simone Barlocco\inst{1}, Clemens Kupke\inst{1}\fnmsep\thanks{Partially supported by EPSRC grant EP/N015843/1.} \and Jurriaan Rot\inst{2}}
\authorrunning{S.~Barlocco, C.~Kupke and J.~Rot}
\institute{University of Strathclyde, Glasgow, \\
\email{\{simone.barlocco,clemens.kupke\}@strath.ac.uk}
\and
Radboud University, Nijmegen, \\
\email{j.rot@cs.ru.nl}}

\begin{document}
\maketitle

\begin{abstract}
Automata learning is a popular technique for inferring minimal automata through membership and equivalence queries. In this paper, we generalise learning to the theory of coalgebras. The approach relies on the use of logical formulas as tests, based on a dual adjunction between states and logical theories. This allows us to learn, e.g., labelled transition systems, using Hennessy-Milner logic. Our main contribution is an abstract learning algorithm, together with a proof of correctness and termination. 
\end{abstract}

\section{Introduction}
\label{sec:intro}

In recent years, automata learning is applied with considerable success 
to infer models of systems and in order to analyse and verify them.
%
Most current approaches to active automata learning are ultimately based 
on the original algorithm due to Angluin~\cite{Angluin87},
although numerous improvements have been made,
in practical performance and in extending the techniques
to different models~\cite{Vaandrager17}.


Our aim is to move from automata to \emph{coalgebras}~\cite{Rutten00,jacobs-coalg}, 
providing a generalisation of learning to a wide
range of state-based systems. 
The key insight underlying our work is that dual adjunctions connecting coalgebras and tailor-made logical languages~\cite{KupkeKP04,BonsangueK05,Klin07,PavlovicMW06,KupkeP11} allow us to devise a generic learning algorithm for coalgebras that is parametric in the type of system under consideration. 
Our approach gives rise to a fundamental distinction between \emph{states}
of the learned system and \emph{tests}, modelled as logical formulas. This distinction is blurred in the classical DFA algorithm, 
where tests are also used to specify the (reachable) states. It is precisely the distinction between
tests and states which allows us to
move beyond classical automata, and use, for instance, Hennessy-Milner logic to
learn bisimilarity quotients of labelled transition systems. 

%

To present learning via duality we need to introduce new notions and refine existing ones. 
First, in the setting of coalgebraic modal logic, we introduce the new notion of \emph{sub-formula closed}
collections of formulas, generalising suffix-closed sets of words in Angluin's algorithm (Section~\ref{sec:subform-closed}). 
Second, we import the abstract notion of \emph{base} of a functor from~\cite{alwin}, which allows
us to speak about `successor states' (Section~\ref{sec:base}). In particular, the base allows us to
characterise \emph{reachability} of coalgebras in a clear and concise way. This yields
a canonical procedure for computing the reachable part from a given initial state in a coalgebra,
thus generalising the notion of a generated subframe from modal logic.

We then rephrase \emph{coalgebra learning} as the problem of inferring a coalgebra which
is reachable, minimal and which cannot be distinguished from the original coalgebra 
held by the teacher using tests. This requires suitably adapting the computation of the reachable part
to incorporate tests, and only learn `up to logical equivalence'. 
We formulate the notion of \emph{closed table}, and an associated procedure to close tables. 
With all these notions in place, we can finally define our abstract algorithm for coalgebra learning,
together with a proof of correctness and termination (Section~\ref{sec:learning}). 
Overall, we consider this correctness and termination proof as the main contribution of the paper;
other contributions are the computation of reachability via the base and the
notion of sub-formula closedness. At a more conceptual level, our paper
shows how states and tests interact in automata learning, by rephrasing it in the context of a dual adjunction 
connecting coalgebra (systems) and algebra (logical theories). 
As such, we provide a new foundation of learning state-based systems.

\paragraph{Related work.}
%
The idea that tests in the learning algorithm should be formulas of a distinct logical language 
was proposed first in~\cite{baku18:angl}. However, the
work in {\em loc.cit.} is quite ad-hoc, confined to Boolean-valued modal logics,
and did not explicitly use duality. 
This paper is a significant improvement: the dual adjunction
framework and the definition
of the base~\cite{alwin} enables us to present a description of Angluin's algorithm in purely categorical terms, including
a proof of correctness and, crucially, termination. 
Our abstract notion of logic also enables us to recover {\em exactly} the standard DFA algorithm
(where tests are words) and the algorithm for learning Mealy machines (where test are many-valued), something that is not possible in \cite{baku18:angl}
where tests are modal formulas. 
Closely related to our work is also the line of research initiated by~\cite{jasi:auto14} and followed up within the CALF project~\cite{heer:anab16,HeerdtS017,heer17:lear} 
which applies ideas from category theory to automata learning. 
Our approach is orthogonal to CALF: the latter
focuses on learning a general version of {\em automata},
whereas our work is geared towards learning bisimilarity quotients of state-based transition systems.
While CALF lends itself to studying automata in a large variety
of base categories, our work thus far is concerned with varying the type of transition structures.

\section{Learning by Example}
\label{sec:examples}

The aim of this section is twofold: (i) to remind the reader of the key elements of Angluin's L$^*$ algorithm~\cite{Angluin87} and
(ii) to motivate and outline our generalisation. 

In the classical L$^*$ algorithm, the learner tries to learn
a regular language $\mathcal{L}$ over some alphabet $A$ or, equivalently, a DFA $\mathcal{A}$ accepting that language.
Learning proceeds by asking queries to a teacher who has access to this automaton.
{\em Membership queries} allow the learner to test whether a given word is in the language, 
and {\em equivalence queries} to test whether the correct DFA has been learned already.
The algorithm constructs so-called tables $(S,E)$ where $S, E \subseteq A^*$
are the rows and columns of the table, respectively. The value at position $(s,e)$ of the table is the answer to the membership query ``$se \in \mathcal{L}$?''.

Words play a double role: On the one hand, a word $w \in S$ represents the
state which is reached when reading $w$ at the initial state.
On the other hand, the set $E$ represents the set of membership queries that the learner is asking about the states in $S$.
A table is {\em closed} if for all $w \in S$ and all $a \in A$ either $wa \in S$ 
or there is a state $v \in S$ such that $wa$ is equivalent to $v$ w.r.t.\ membership queries of words in $E$.   
If a table is not closed we extend $S$ by adding words of the form $w a$ for $w \in S$ and $a \in A$.
Once it is closed, one can define a {\em conjecture},\footnote{The algorithm additionally
requires \emph{consistency}, but this is not needed if counterexamples
are added to $E$. This  idea goes back to~\cite{mapn95:onth}.} 
i.e., a DFA with states in $S$. The learner now asks
the teacher whether the conjecture is correct. If it is, the algorithm terminates. Otherwise 
the teacher provides a {\em counterexample}: a word on which the conjecture is incorrect. The table 
is now extended using the counterexample. As a result, the table
is not closed anymore and the algorithm continues again by closing the table.

Our version of L$^*$ introduces some key conceptual differences: tables are 
pairs 
$(S,\Psi)$ such that $S$ (set of rows) is a selection of states of $\mathcal{A}$ and $\Psi$ (set of columns) is a collection of tests/formulas.
Membership queries become checks of tests in $\Psi$ at states in $S$ and equivalence queries
verify whether or not the learned structure is logically equivalent to the original one. 
A table $(S,\Psi)$ is closed if for all successors $x'$ of elements of $S$ there exists an $x \in S$ such that $x$ and $x'$
are equivalent w.r.t.\ formulas in $\Psi$.
The clear distinction
between states and tests in our algorithm means that counterexamples are formulas that have to be added to $\Psi$. 
Crucially, the move from words to formulas 
allows us to use the rich theory of coalgebra and coalgebraic logic to devise a generic algorithm. 


We consider two examples within our generic framework: classical DFAs, yielding essentially the L$^*$ algorithm, 
and labelled transition systems, 
which is to the best of our knowledge not covered by standard automata learning algorithms.

For the DFA case, let $L = \{u \in \{a,b\}^{*} \mid \mbox{number of } a\mbox{'s} \mbox{ mod } 3 = 0\}$ 
and assume that the teacher uses the following (infinite) automaton describing $L$: 
\begin{center}
\begin{tikzpicture}[->,node distance=1.3cm, semithick, auto]
  \tikzstyle{every state}=[text=black]

  \node[initial,state, initial text=, accepting] (A)  {$q_0$};
  \node[state] (B) [right of=A] {$q_1$};
  \node[state] (C) [right of = B]{$q_2$};
  \node[state] (D) [right of = C, accepting] {$q_3$};
  \node[state] (E) [right of = D] {$q_4$};
  \node[state] (F) [right of = E] {$q_5$};
  \node[state] (G) [right of = F, accepting] {$q_6$};
  \node[state] (H) [right of = G] {$q_7$};
  \node[state] (I) [right of = H, draw=none]{$\cdots$}; 
  
  \path (A) edge node {\tiny $a$} (B)
              edge [loop above] node {\tiny $b$} (C)
           (B) edge node {\tiny $a$} (C)
                       edge [loop above] node {\tiny $b$} (C)
           (C) edge node {\tiny $a$} (D)
                       edge [loop above] node {\tiny $b$} (C)
           (D) edge node {\tiny $a$} (E)
                       edge [loop above] node {\tiny $b$} (C)
           (E) edge node {\tiny $a$} (F)
                       edge [loop above] node {\tiny $b$} (C)
           (F) edge node {\tiny $a$} (G)
                       edge [loop above] node {\tiny $b$} (C)
           (G) edge node {\tiny $a$} (H)
                       edge [loop above] node {\tiny $b$} (C)
           (H) edge node {\tiny $a$} (I)
		         edge [loop above] node {\tiny $b$} (H);
\end{tikzpicture}
\end{center}
As outlined above, the learner starts to construct tables $(S,\Psi)$ where $S$ is a selection of states
of the automaton and $\Psi$ are formulas. For DFAs we will see (Ex.~\ref{ex:dfa}) that 
our formulas are just words in $\{a,b\}^*$.
Our starting table is $(\{q_0\}, \emptyset)$, i.e., we select the initial state and do not check any logical properties.
This table is trivially closed, as all states are equivalent w.r.t.\ $\emptyset$.
The first conjecture is the automaton consisting of one accepting state $q_0$ with $a$- and $b$-loops,
whose language is $\{a,b\}^*$.
This is incorrect and the teacher provides, e.g., $aa$ as counterexample.
The resulting table is $(\{q_0\},\{\varepsilon,a,aa\})$ where the second component was generated by closing $\{aa\}$
under suffixes. Suffix closedness features both in the original  L$^*$ algorithm and in our framework (Section~\ref{sec:subform-closed}). 
The table $(\{q_0\},\{\varepsilon,a,aa\})$ is not closed as $q_1$, the $a$-successor of $q_0$, 
does not accept $\varepsilon$ whereas $q_0$ does. Therefore we extend the table to $(\{q_0,q_1\},\{\varepsilon,a,aa\})$.
Note that, unlike in the classical setting, exploring
successors of already selected states cannot be achieved by appending letters to words, but we need to
{\em locally} employ the transition structure on the automaton $\mathcal{A}$ instead.
A similar argument shows that we need to extend the table further to $(\{q_0,q_1,q_2\},\{\varepsilon,a,aa\})$ which is closed.
This leads to the (correct) conjecture depicted on the right below.
The acceptance condition and transition structure has been
read off from the original automaton, where the transition from $q_2$ to $q_0$ is obtained by realising that $q_2$'s successor $q_3$
is represented by the equivalent state $q_0 \in S$. 

\begin{wrapfigure}[5]{r}{0cm}
\begin{minipage}{4cm}\vspace{-25pt}\centering
\begin{tikzpicture}[->,auto,node distance=1.2cm]]
  \tikzstyle{every state}=[text=black]

  \node[initial, initial text=, state, accepting] (A) {$q_0$};
  \node[state]         (B) [right of=A] {$q_1$};
  \node[state]         (C) [right of =B] {$q_2$}; 

  \path (A) edge [bend left]  node {\tiny $a$} (B)
            edge [loop above] node {\tiny $b$} (C)
        (B) edge [loop above] node {\tiny $b$} (C)
            edge [bend left]  node {\tiny $a$} (C)
        (C) edge [bend left]  node {\tiny $a$} (A)
            edge [loop above] node {\tiny $b$} (C);
\end{tikzpicture}
\end{minipage}
\end{wrapfigure}

 A key feature of our work is that the L$^*$ algorithm can be systematically generalised to new settings, in particular,
 to the learning of bisimulation quotients of transition systems.
Consider the following labelled transition system (LTS). 
We would like to learn its minimal representation, i.e., its quotient modulo bisimulation.

\begin{wrapfigure}[7]{r}{0cm}
\begin{minipage}{8cm}\vspace{-25pt}\centering
\begin{tikzpicture}[->,node distance=1.3cm, semithick, auto]
  \tikzstyle{every state}=[text=black]
  \node[state, initial, initial text=] (A)  {$x_0$};
  \node[state] (B) [below left of=A] {$x_1$};
  \node[state] (C) [below right of = A]{$x_2$};
  \node[state]
  (D) [below left of = B] {$x_3$};
  \node[state] 
  (E) [below right of = B] {$x_4$};
  \node[state] 
  (F) [right of = E] {$x_5$};
  \node[state] 
  (G) [right of = F] {$x_6$};
  \node[state] 
  (H) [right of = G] {$x_7$};
 \node[state] (I) [right of = H, draw=none]{$\cdots$}; 
  \path (A) edge [bend left = 20] node {\tiny $a$} (B)
                 edge [bend left = 20] node {\tiny $a$} (C)
           (B) edge [bend left = 20] node {\tiny $a$} (A)
                 edge node ['] {\tiny $b$} (D)
                 edge node {\tiny $a$} (E)
           (C) edge [bend left = 20] node {\tiny $a$} (A)
                edge node [', near start] {\tiny $b$} (F)
                edge node {\tiny $a$} (G)
           (E) edge [loop left] node {\tiny $b$} (E)
           (G) edge node {\tiny $b$} (H)
           (H) edge node {\tiny $b$} (I);
\end{tikzpicture}
\end{minipage}
\end{wrapfigure}
Our setting allows us to choose a suitable logical language.  
For LTSs, the language consists of the formulas of standard 
multi-modal logic (cf. Ex.~\ref{ex:Pow}).
The semantics is as usual where $\diam{a}\phi$ holds at a state if it has an $a$-successor that
makes $\phi$ true.

As above, the algorithm constructs tables, starting with $(S = \{x_0\}, \Psi = \emptyset)$.
The table is closed, so the first conjecture is a single state with an $a$-loop with no proposition letter true (note that $x_0$ has no $b$ or $c$ successor and 
no proposition is true at $x_0$). 
It is, however, easy for the teacher to find a counterexample. For example, the formula
$\diam{a} \diam{b} \top$ is true at the root of the original LTS but false in the conjecture.  
We add the counterexample and all its subformulas to $\Psi$ and obtain a new table
$(\{x_0\},\Psi'\}$ with $\Psi' = \{ \diam{a} \diam{b} \top, \diam{b} \top, \top\}$. 
Now, the table is not closed, as $x_0$ has successor $x_1$ that satisfies $\diam{b} \top$
whereas $x_0$ does not satisfy $\diam{b}\top$. 
Therefore we add $x_1$ to the table to obtain $(\{x_0,x_1\},\Psi')$.
Similar arguments will lead to the closed table
$(\{x_0,x_1,x_3,x_4\},\Psi')$ which also yields the correct conjecture.
Note that the state $x_2$ does not get added to the table as it is equivalent to $x_1$
and thus already represented. This demonstrates a remarkable fact: we computed the bisimulation
quotient of the LTS without inspecting the (infinite) right-hand side of the LTS.

Another important example that fits smoothly into our framework is the well-known variant of Angluin's algorithm to
learn Mealy machines~(Ex.~\ref{ex:mealy}). Thanks to our general notion of logic, our framework allows to 
use an intuitive language, where a formula is simply an input word $w$ whose truth value at a state $x$ is the observed output 
after entering $w$ at $x$. This is in contrast to~\cite{baku18:angl} where formulas had to be Boolean valued.
Multi-valued logics fit naturally in our setting; this is expected to be useful to deal with systems
with quantitative information.

\section{Preliminaries}
\label{sec:prelims}


The general learning algorithm in this paper is based on
the theory of \emph{coalgebras}, 
which provides an abstract framework for representing
state-based transition systems. 
In what follows we assume that the reader is familiar with basic notions of category theory 
and coalgebras~\cite{jacobs-coalg,Rutten00}.
We briefly recall the notion of pointed coalgebra, modelling
a coalgebra with an initial state. 
  Let $\Cat{C}$ be a category with a terminal object $1$ and let $\B \colon \Cat{C} \to \Cat{C}$ be a functor.
  A pointed $\B$-coalgebra is a triple $(X,\gamma,x_0)$ where $X \in \Cat{C}$ and  $\gamma\colon X \to \B X$
  and $x_0\colon 1 \to X$, specifying the coalgebra structure and the point (``initial state'') of the coalgebra, respectively.

%

\paragraph{Coalgebraic modal logic.}

Modal logics are used to describe properties of state-based systems, modelled here as coalgebras. 
The close relationship between coalgebras and their logics is described
elegantly via dual adjunctions~\cite{KupkeKP04,Klin07,PavlovicMW06,KupkeP11}.

Our basic setting consists of two categories $\Cat{C},
\Cat{D}$ connected by functors $\Pred, \Q$ forming a dual adjunction $P \dashv Q \colon \Cat{C} \leftrightarrows \Cat{D}^{\mathsf{op}}$.
In other words, we
have a natural bijection
$\Cat{C}(X, \Q \Delta) \cong \Cat{D}(\Delta, \Pred X) \mbox{ for } X \in 
\Cat{C}, \Delta \in \Cat{D}$.
Moreover, we assume 
\begin{wrapfigure}[4]{r}{0pt}
\begin{minipage}{16em} \vspace{-3em}
\begin{equation}\label{eq:dual-adjunction}
  \xymatrix@C=0.5cm{
\Cat{C}\ar@/^2ex/[rr]^-{\Pred} \save !L(.5) \ar@(dl,ul)^{B} \restore & \bot & \Cat{D}^{\mathsf{op}} \ar@/^2ex/[ll]^-{\Q} \save !R(.5) \ar@(ur,dr)^{L} \restore 
}
\end{equation}
\end{minipage}
\end{wrapfigure}
two functors, $\B \colon \Cat{C} \rightarrow \Cat{C}, L \colon
\Cat{D} \rightarrow \Cat{D}$, see~\eqref{eq:dual-adjunction}.
The functor $L$ represents the syntax of the (modalities in the) logic:
assuming that $L$ has an initial
algebra $\alpha \colon L \Phi \rightarrow \Phi$ we think of 
$\Phi$ as the collection of formulas, or tests.
In this logical perspective, the functor $\Pred$ maps
an object $X$ of $\Cat{C}$ to the collection of predicates
and the functor $\Q$ maps an object $\Delta$ of $\Cat{D}$
to the collection $\Q \Delta$ of $\Delta$-theories.


The connection between coalgebras and their logics is specified
via a natural transformation $\D \colon L \Pred \Rightarrow \Pred \B$, sometimes
referred to as the one-step semantics
\begin{wrapfigure}[5]{r}{0pt}
\begin{minipage}{18em} \vspace{-3.5em}
\begin{equation}\label{eq:semantics-logic}
\vcenter{
 \xymatrix@R=0.5cm{ L \Phi \ar@{-->}[r]^{L \Sem{\_}} \ar[d]_\alpha & L \Pred X \ar[r]^{\D_X} & 
     \Pred \B X \ar[d]^{\Pred \gamma} \\
    \Phi \ar@{-->}[rr]^{\exists ! \Sem{\_}} & & \Pred X }
}
\end{equation}
\end{minipage}
\end{wrapfigure}
 of the logic. The $\D$ is used to define the semantics of
the logic on a $\B$-coalgebra $(X,\gamma)$ by initiality,
as in~\eqref{eq:semantics-logic}.
Furthermore, using the bijective correspondence of the dual adjunction between $\Pred$ and $\Q$,
the map $\Sem{\_}$ corresponds to a map
$\theory{}{\gamma}\colon X \rightarrow \Q \Phi$ that we will refer to as the theory
map of $(X,\gamma)$. 
\begin{wrapfigure}[5]{r}{0pt}
\begin{minipage}{18em} \vspace{-3.5em}
\begin{equation}\label{eq:theory-map-logic}
\vcenter{
   \xymatrix@R=0.5cm{\B X \ar@{-->}[r]^{\B  \theory{}{\gamma}} & \B \Q \Phi \ar[r]^{\D^\flat_\Phi} 
      & \Q L \Phi \\  
      X \ar[u]^\gamma \ar@{-->}[rr]^{\exists ! \theory{}{\gamma}} & & \Q \Phi \ar[u]_{\Q \alpha}}
}
\end{equation}
\end{minipage}
\end{wrapfigure}

The theory map can be expressed directly via a universal property,
by making use of the so-called \emph{mate}  $\D^\flat\colon \B\Q \Rightarrow \Q L$
of the one-step semantics $\D$ (cf.~\cite{Klin07,PavlovicMW06}). 
More precisely, we have $\D^\flat = \Q L \varepsilon \circ \Q \delta \Q \circ \eta B \Q$,
where $\eta, \varepsilon$ are the unit and counit of the adjunction.
Then $\theory{}{\gamma}\colon X \to \Q \Phi$ is the unique morphism making~\eqref{eq:theory-map-logic} commute.
%

\begin{example}
\label{ex:dfa} Let $\Cat{C}= \Cat{D} = \mathsf{Set}, \Pred=\Q = 2^{-}$ the contravariant 
power set functor,
$\B = 2 \times -^{A}$ and $L = 1 + A \times
-$.  In this case $\B$-coalgebras can be thought of as 
deterministic automata with input alphabet $A$ (e.g.,~\cite{Rutten98}).
It is well-known that the initial $L$-algebra 
is $\Phi = A^{*}$ with structure $\alpha = [\varepsilon, \mathrm{cons}]\colon 1 + A \times A^* \to A^*$
where $\varepsilon$ selects the empty word and $\mathrm{cons}$ maps a pair $(a,w) \in A \times A^*$
to the word $aw \in A^*$, i.e., in this example our tests are words with the intuitive meaning that
a test succeeds if the word is accepted by the given automaton.
For $X \in \Cat{C}$, the $X$-component of the (one-step) semantics $\D\colon L \Pred \Rightarrow \Pred \B$ is defined as follows:
$ \D_X (\ast) = \lbrace (i, f) \in 2 \times X^A \mid i = 1 \rbrace$,
and $\D_X (a, U) = \lbrace (i, f) \in 2 \times X^A \mid f(a) \in U \rbrace$.
It is matter of routine checking that the semantics of tests in $\Phi$ 
on a $\B$-coalgebra $(X,\gamma)$ is as follows: we have
$\Sem{\varepsilon} = \{ x \in X \mid \pi_1 (\gamma(x)) = 1 \}$
and $\Sem{a w} = \{ x \in X \mid \pi_2 (\gamma(x))(a)\in \Sem{w}\}$,
where $\pi_1$ and $\pi_2$ are the projection maps.
The theory map $\theory{}{\gamma}$ sends a state to the language accepted by that state in the usual way. 
\end{example}
\begin{example}
\label{ex:mealy} Again let $\Cat{C} = \Cat{D} = \Set$ and consider the
functors $\Pred = \Q = O^{-}$, $\B = (O \times -)^A$ and $L = A \times (1 + -)$, where $A$ and $O$ are fixed sets,
thought of as input and output alphabet, respectively. Then $\B$-coalgebras
are Mealy machines and the initial $L$-algebra is given by the set $A^+$ of finite non-empty words over $A$.
For $X \in \Cat{C}$, the one-step semantics $\D_X\colon A \times (1 + O^X) \to O^{\B X}$ is defined by 
$\D_X(a,\mathrm{inl(*)}) = \lambda f . \pi_1 (f(a))$ and
$\D_X(a,\mathrm{inr}(g)) = \lambda f.  g(\pi_2(f(a)))$.
Concretely, formulas are words in $A^+$; the ($O$-valued) semantics of $w \in A^+$ at state $x$
is the output $o \in O$ that is produced after processing the input $w$ from state $x$.
\end{example}
\begin{example}\label{ex:Pow} 
Let $\Cat{C} = \Set$ and $\Cat{D} = \BA$, where the latter denotes the
category of Boolean algebras. 
Again $\Pred = 2^{-}$, but this time $2^X$ is interpreted as a Boolean algebra. 
The functor $\Q$ maps a Boolean algebra to
the collection of ultrafilters over it~\cite{blrive01:moda}.  Furthermore $\B=(\Pow -)^A$ where
$\Pow$ denotes covariant power set 
and $A$ a set of actions. 
Coalgebras for this functor correspond to
labelled transition systems, where a state has a set of successors that depends on the action/input from $A$.
The dual functor $L\colon \BA \to \BA$ is defined as
$L Y \mathrel{:=} F_{\BA} (\{ \diam{a} y \mid a \in A, y \in Y \}) 
/ \!\equiv$
where $F_\BA \colon \Set \to \BA$ denotes the free Boolean algebra functor and where, 
roughly speaking, $\equiv$ is the congruence generated from the axioms
    %
${\diam{a} \perp} \equiv {\perp}$ and $\diam{a}(y_1 \vee y_2) \mathrel{\equiv} \diam{a}(y_1) \vee \diam{a}(y_2)$ for each $a \in A$.
This is explained in more detail in~\cite{KupkeP11}. The initial algebra for this functor is the so-called Lindenbaum-Tarski
algebra~\cite{blrive01:moda} of modal formulas
$\left( \phi \mathrel{::=} \perp \mid \phi \vee \phi \mid \neg \phi \mid \diam{a} \phi \right)$ 
quotiented by logical equivalence. The definition of an appropriate $\D$
can be found in, e.g.,~\cite{KupkeP11}---the semantics $\Sem{\_}$ of a formula then amounts to the standard one~\cite{blrive01:moda}. 
\end{example}

Different types of probabilistic transition systems also fit into the dual adjunction framework, see, e.g,~\cite{JacobsS09}. 

\paragraph{Subobjects and intersection-preserving functors.}

We denote by $\Sub(X)$ the collection of subobjects of an object $X \in \Cat{C}$. 
Let $\leq$  be the order\label{order} on subobjects $s \colon S \rightarrowtail X, s' \colon S' \rightarrowtail X$ given by 
$s \leq s'$ iff there is $m \colon S \rightarrow S'$ s.t.\ $s = s' \circ m$. 
The \emph{intersection} $\bigwedge J \rightarrowtail X$ of a family $J = \{s_i \colon S_i \rightarrow X\}_{i \in I}$ is
defined as the greatest lower bound w.r.t.\ the order $\leq$. In a complete category, it
can be computed by (wide) pullback. We denote the maps in the limiting cone by $x_i \colon \bigwedge J \rightarrowtail S_i$. 

For a functor $B \colon \Cat{C} \rightarrow \Cat{D}$, we say $B$ \emph{preserves (wide)
intersections}
if it preserves these wide pullbacks, i.e., if $(B(\bigwedge J), \{B x_i\}_{i 
\in I})$ is the pullback of  $\{B s_i \colon B S_i \rightarrow B X\}_{i \in I}$.
By \cite[Lemma 3.53]{AMMS13} (building on~\cite{trnkova1971descriptive}), \emph{finitary} functors on $\Set$ `almost' preserve wide intersections: 
for every such functor $B$ there is a functor $B'$ which preserves wide intersections and agrees
with $B$ on all non-empty sets. 
Finally, if $B$ preserves intersections, then it preserves monos.

\paragraph{Minimality notions.}

The algorithm that we will describe in this paper learns a minimal and reachable representation of an object. 
The intuitive notions
of minimality and  reachability are formalised as follows. 

\begin{definition}
    We call a $\B$-coalgebra $(X,\gamma)$ \emph{minimal w.r.t.\ logical equivalence} if
    the theory map $\theory{}{\gamma}\colon X \to \Q \Phi$ is a monomorphism.
\end{definition}
\begin{definition}\label{def:reachable}
    We call a pointed $\B$-coalgebra $(X,\gamma,x_0)$ \emph{reachable}
    if for any subobject $s\colon S \to X$ and $s_0\colon 1 \to S$ with $x_0 = s \circ s_0$:
 	if $S$ is a subcoalgebra of $(X,\gamma)$ then
    $s$ is an isomorphism. 
\end{definition}
For expressive logics~\cite{schr08:expr}, behavioural equivalence concides with 
logical equivalence. Hence, in that case, our algorithm 
learns a ``well-pointed coalgebra'' in the terminology of~\cite{AMMS13}, i.e., a 
pointed coalgebra that is reachable and minimal w.r.t.~behavioural
equivalence. All logics appearing in this paper are expressive.


\paragraph{Assumption on $\Cat{C}$ and Factorisation System.}

Throughout the paper we will assume that $\Cat{C}$ is a complete
and well-powered category. Well-powered means that for each $X \in \Cat{C}$ the collection
$\Sub(X)$ of subobjects of a given object forms a set. 
Our assumptions imply~\cite[Proposition 4.4.3]{borceux1994}  that every morphism $f$ in $\Cat{C}$
 \begin{wrapfigure}[5]{r}{0pt}
\begin{minipage}{10em} \vspace{-3em}
\begin{equation}\label{eq:fill-in}
\begin{tikzcd}
X \ar[d, "h"'] \ar[r, "e", twoheadrightarrow] &
Y \ar[d, "g"] \ar[dl, "d", dashed] \\
U \ar[r, "m"', tail] & Z
\end{tikzcd}
\end{equation}
\end{minipage}
\end{wrapfigure}
factors uniquely (up to isomorphism) as $f = m \circ e$ with $m$
a mono and $e$ a strong epi. 
Recall that an epimorphism
$e \colon X \rightarrow Y$ is strong if for every commutative
square in~\eqref{eq:fill-in} where the bottom arrow is a monomorphism, there
exists a unique diagonal morphism $d$ such that the entire diagram commutes.


\section{Subformula Closed Collections of Formulas}
\label{sec:subform-closed}


%

Our learning algorithm will construct conjectures that are ``partially'' correct, i.e.,
correct with respect to a subobject of the collection of all formulas/tests. 
Recall this collection of all tests are formalised in our setting as the initial $L$-algebra $(\Phi, \alpha\colon L \Phi \to \Phi)$. 
To define a notion of partial correctness we need to consider 
subobjects of $\Phi$ to which we can restrict the theory map. This is formalised via the notion
of ``subformula closed'' subobject of $\Phi$. 

\begin{wrapfigure}[5]{r}{0pt}
\begin{minipage}{10em} \vspace{-3em}
\begin{equation}\label{eq:coalg-to-alg}
\begin{tikzcd}
LX \ar[r, "Lg^{\dagger}"] & LY \ar[d, "g"] \\
X \ar[u, "f"] \ar[r, "g^{\dagger}"] & Y 
\end{tikzcd}
\end{equation}
\end{minipage}
\end{wrapfigure}
The definition of such subobjects is based on the notion
of \emph{recursive coalgebra}. 
For $L \colon \Cat{D} \rightarrow \Cat{D}$ an endofunctor,
a coalgebra $f \colon X \rightarrow LX$ is called
\textit{recursive} if for every $L$-algebra $g \colon
LY \rightarrow Y$ there is a unique `coalgebra-to-algebra' map 
$g^{\dagger}$ making~\eqref{eq:coalg-to-alg} commute.




\begin{definition}\label{def:subclosed}
 A subobject $j \colon \Psi \to \Phi$ is called a {\em subformula closed collection} (of formulas) 
 if there is a unique
 $L$-coalgebra structure $\sigma \colon \Psi \to L \Psi$ such that
	$(\Psi, \sigma)$ is a recursive $L$-coalgebra and 
 $j$ is the (necessarily unique) coalgebra-to-algebra map
  from $(\Psi,\sigma)$ to the initial algebra $(\Phi,\alpha)$.
\end{definition}

 \begin{remark}
   The uniqueness of $\sigma$ in Definition~\ref{def:subclosed} is implied if $L$ preserves 
   monomorphisms. This is the case in our examples. The notion of recursive coalgebra 
   goes back to~\cite{tayl99:prac,OSIUS197479}. 
   The paper~\cite{AdamekLM07} contains a claim that
   the first item of our definition of subformula closed collection is implied by the second one if $L$ preserves preimages.   
   In our examples both properties of $(\Psi,\sigma)$ are verified directly, rather than by relying on general
   categorical results. 
 \end{remark}


\begin{example}
     In the setting of Example~\ref{ex:dfa}, where the initial
     $L$-algebra is based on the set $A^*$ of words over the set (of inputs) $A$, a subset $\Psi \subseteq A^*$ is subformula-closed
     if it is suffix-closed, i.e., if for all $a w \in \Psi$ we have $w \in \Psi$ as well. 
\end{example}
\begin{example}
	In the setting that $\B = (\Pow -)^A$ for some set of actions $A$, $\Cat{C} = \Set$ 
     and $\Cat{D}=\BA$, the logic is given as a functor $L$ on Boolean algebras as discussed in Example~\ref{ex:Pow}.
     As a subformula closed collection is an object in $\Psi$, we are not simply dealing with a set of formulas, but with a Boolean algebra.
     The connection to the standard notion of being closed under taking subformulas in modal logic~\cite{blrive01:moda} can be sketched as follows:
     given a set $\Delta$ of modal formulas that is closed under taking subformulas, we define a Boolean algebra $\Psi_\Delta \subseteq \Phi$
     as the smallest Boolean subalgebra of $\Phi$ that is generated by the set $\hat{\Delta} = \{ [\phi]_{\Phi} \mid \phi \in \Delta \}$ where for a formula
     $\phi$ we let $[\phi]_\Phi \in \Phi$ denote its equivalence class in $\Phi$. 
     
     It is then not difficult to define a suitable $\sigma\colon \Psi_\Delta \to L \Psi_\Delta$.
     As $\Psi_\Delta$ is generated by closing  $\hat{\Delta}$ under Boolean operations, any
     two states $x_1,x_2$ in a given coalgebra $(X,\gamma)$ satisfy 
     $ \left( \forall b \in \Psi_\Delta. x_1 \in \Sem{b}  \Leftrightarrow x_2 \in \Sem{b} \right) \mbox{ iff }
      \left( \forall b \in \hat{\Delta}. x_1 \in \Sem{b}  \Leftrightarrow x_2 \in \Sem{b} \right).
     $
     In other words, equivalence w.r.t.\ $\Psi_\Delta$ coincides with equivalence w.r.t.\ 
     the {\em set} of formulas $\Delta$. This explains why in the concrete algorithm, 
     we do not deal with Boolean algebras explicitly, but with
     subformula closed sets of formulas instead.     
\end{example}

\begin{wrapfigure}[5]{r}{0pt}
\begin{minipage}{18em} \vspace{-3.5em}
	\begin{equation}\label{eq:thmap-psi}
	\vcenter{
		\xymatrix@C=1cm{
	X \ar[d]_{\gamma} \ar[rr]^{\theory{\Psi}{\gamma}}
		& & \Q \Psi  \\
	\B X \ar[r]^{\B \theory{\Psi}{\gamma}} 
		& \B \Q \Psi \ar[r]^{\delta^{\flat}_{\Psi}}
		& \Q L\Psi \ar[u]_{\Q \sigma}
        } }
    \end{equation}
\end{minipage}
\end{wrapfigure}
The key property of subformula closed collections $\Psi$ is that we can restrict
our attention to the so-called $\Psi$-theory map. Intuitively, subformula closedness 
is what allows us to define this theory map inductively. 


\begin{lemma}\label{lm:subf}
	Let $\Psi \stackrel{j}{\rightarrowtail} \Phi$ be a sub-formula closed collection, with coalgebra structure $\sigma \colon \Psi \rightarrow L\Psi$. 
	Then $\theory{\Psi}{\gamma} = \Q j \circ \theory{\Phi}{\gamma}$ is the unique
	map making~\eqref{eq:thmap-psi} commute.
We call $\theory{\Psi}{\gamma}$ the $\Psi$-theory map, and omit the $\Psi$
if it is clear from the context.
\end{lemma}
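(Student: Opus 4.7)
The plan is to split the proof into existence and uniqueness. Existence amounts to verifying that the concrete map $t := \Q j \circ \theory{\Phi}{\gamma}$ makes~\eqref{eq:thmap-psi} commute; uniqueness then exploits the hypothesis that $(\Psi,\sigma)$ is a recursive $L$-coalgebra.

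For existence I would perform a direct diagram chase, combining three ingredients:
(i) naturality of the mate $\D^\flat \colon \B\Q \Rightarrow \Q L$ applied to $j \colon \Psi \to \Phi$, which yields $\D^\flat_\Psi \circ \B\Q j = \Q L j \circ \D^\flat_\Phi$;
(ii) the defining equation of the $\Phi$-theory map in~\eqref{eq:theory-map-logic}, namely $\Q\alpha \circ \theory{\Phi}{\gamma} = \D^\flat_\Phi \circ \B \theory{\Phi}{\gamma} \circ \gamma$;
(iii) the coalgebra-to-algebra equation $\alpha \circ L j \circ \sigma = j$ from Definition~\ref{def:subclosed}.
Substituting $t$ into the right-hand side of~\eqref{eq:thmap-psi}, pulling $\B\Q j$ past $\D^\flat$ by (i), removing $\D^\flat_\Phi \circ \B\theory{\Phi}{\gamma}\circ\gamma$ via (ii), and finally collapsing $\Q\sigma \circ \Q Lj \circ \Q\alpha$ to $\Q j$ using the (contravariant) functoriality of $\Q$ together with (iii), yields exactly $t$, so the diagram commutes.

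For uniqueness I would transpose along the dual adjunction $\Pred \dashv \Q$. Any $t \colon X \to \Q\Psi$ corresponds to a morphism $\bar t \colon \Psi \to \Pred X$ in $\Cat{D}$, and unravelling $\D^\flat = \Q L \varepsilon \circ \Q \D \Q \circ \eta \B \Q$ together with the triangle identities shows that~\eqref{eq:thmap-psi} holds for $t$ if and only if $\bar t = (\Pred\gamma \circ \D_X) \circ L\bar t \circ \sigma$; that is, $\bar t$ is a coalgebra-to-algebra map from the $L$-coalgebra $(\Psi,\sigma)$ to the $L$-algebra $(\Pred X, \Pred\gamma \circ \D_X)$ obtained from $(X,\gamma)$ via the one-step semantics. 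Recursiveness of $(\Psi,\sigma)$ then forces $\bar t$, and hence $t$, to be unique.

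The main obstacle is the transposition step in the uniqueness argument: converting the diagram~\eqref{eq:thmap-psi} in $\Cat{C}$ into the coalgebra-to-algebra equation in $\Cat{D}$. This is essentially the standard mate calculus of~\cite{Klin07,PavlovicMW06}, but it requires careful bookkeeping with the triangle identities for $\Pred \dashv \Q$. Once this correspondence is in place, existence reduces to the diagram chase described above and uniqueness is immediate from the recursive coalgebra property of $(\Psi,\sigma)$.
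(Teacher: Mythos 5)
Your proof is correct. The paper itself states Lemma~\ref{lm:subf} without including a proof in the appendix, but your argument is the natural one and matches how the paper sets things up: the existence chase via naturality of $\D^\flat$, the defining square~\eqref{eq:theory-map-logic} for $\theory{\Phi}{\gamma}$, and contravariant functoriality of $\Q$ applied to $j = \alpha \circ Lj \circ \sigma$ is sound, and the uniqueness step is exactly the standard transposition of~\eqref{eq:thmap-psi} into a coalgebra-to-algebra square for the $L$-algebra $\Pred\gamma \circ \D_X$ on $\Pred X$, where recursiveness of $(\Psi,\sigma)$ finishes the job.
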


%
%
%

\section{Reachability and the Base}
\label{sec:base}


In this section, we define the notion of \emph{base} of an endofunctor, taken from~\cite{alwin}.
This allows us to speak about the (direct) successors of states in a coalgebra,
and about reachability, which
are essential ingredients of the learning algorithm.


\begin{definition}
Let $\B\colon \Cat{C} \rightarrow \Cat{C}$ be an
endofunctor.
We say $\B$ \emph{has a base} if for every arrow $f
\colon X \rightarrow \B Y$ there exist $g \colon 
X \rightarrow \B Z$ and $m \colon Z \rightarrowtail Y$
with $m$ a monomorphism such that $f = \B m \circ g$, and for any
pair $g' \colon X \rightarrow \B Z', m' \colon Z' \rightarrowtail Y$
with $\B m' \circ g' = f$ and $m'$ a monomorphism there is a unique
arrow $h \colon Z \rightarrow Z'$ such that $\B h \circ g
= g'$ and $m' \circ h = m$, see Diagram~\eqref{eq:base-diagram}.
We call $(Z,g,m)$ the \emph{($\B$)-base} of the morphism $f$. 
\end{definition}

\begin{wrapfigure}[5]{r}{0pt}
\begin{minipage}{17em} \vspace{-3.5em}
\begin{equation}\label{eq:base-diagram}
\begin{tikzcd}
X \ar[rr, "f", bend left] \ar[r, "g"'] 
  \ar[dr, "g'"', bend right] 
  & \B Z  \ar[d, "\B h"] \ar[r, "\B m"'] & \B Y \\
  & \B Z' \ar[ur, "\B m'"', bend right] & 
\end{tikzcd}
\end{equation}
\end{minipage}
\end{wrapfigure}
We sometimes refer to $m \colon Z \rightarrowtail Y$ as the base of $f$, omitting the $g$
when it is irrelevant, or clear from the context. 
Note that the terminology `the' base is justified, as it is easily
seen to be unique up to isomorphism. 

For example, 
let $B \colon \Set \rightarrow \Set$, $BX = 2 \times X^A$.
The base of a map $f \colon X \rightarrow BY$ is given by 
$m \colon Z \rightarrowtail Y$, where 
$Z = \{(\pi_2 \circ f)(x)(a) \mid x \in X, a \in A \}$,
and $m$ is the inclusion. The associated $g \colon X \rightarrow BZ$
is the corestriction of $f$ to $BZ$. 

For $\B = (\Pow -)^A \colon \Set \rightarrow \Set$,
the $\B$-$base$ of $f \colon X \rightarrow Y$ is given by the inclusion $m \colon Z \rightarrowtail Y$,
where $Z=\{ y \in Y \mid \exists x \in X, \exists a \in A \mbox{ s.t. } y \in f(x)(a)\}$.

%
%
%
%
%

\begin{proposition}\label{prop:existence-base}
	Suppose $\Cat{C}$ is complete and well-powered, and $B \colon \Cat{C} \rightarrow
	\Cat{C}$ preserves (wide) intersections. Then $B$ has a base. 
\end{proposition}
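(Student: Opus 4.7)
The plan is to exhibit the base of an arbitrary $f \colon X \to BY$ as the intersection of \emph{all} monos $m'$ through which $f$ factors. Well-poweredness keeps this collection a set, completeness lets us form the intersection, and the preservation of intersections by $B$ is exactly what allows us to lift the intersection back to a factorisation of $f$.

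More concretely, I would first let $I$ be the set of subobjects $m_i \colon Z_i \rightarrowtail Y$ such that there exists some $g_i \colon X \to BZ_i$ with $B m_i \circ g_i = f$ (this is a set because $\Cat{C}$ is well-powered). Note that $B$ preserves monos (stated in the paper as a consequence of preserving intersections), so whenever $B m_i \circ g_i = B m_i \circ g_i'$ we get $g_i = g_i'$; hence $g_i$ is uniquely determined by $m_i$. Next I would form the intersection $m \colon Z \rightarrowtail Y$ of the family $\{m_i\}_{i \in I}$, with projections $z_i \colon Z \rightarrowtail Z_i$ satisfying $m_i \circ z_i = m$. Since $B$ preserves wide intersections, $(BZ, \{B z_i\}_{i \in I})$ is the wide pullback of $\{B m_i \colon B Z_i \to B Y\}_{i \in I}$.

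Then I would observe that the family $\{g_i \colon X \to B Z_i\}_{i \in I}$ forms a cone over this wide pullback diagram: for any $i, j$ we have $B m_i \circ g_i = f = B m_j \circ g_j$. By the universal property of the intersection, there is a unique $g \colon X \to B Z$ with $B z_i \circ g = g_i$ for every $i$. Composing with $B m$ gives $B m \circ g = B m_i \circ B z_i \circ g = B m_i \circ g_i = f$, so $(Z, g, m)$ is a factorisation.

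For universality, suppose $g' \colon X \to B Z'$ and a mono $m' \colon Z' \rightarrowtail Y$ satisfy $B m' \circ g' = f$. Then $m'$ represents some element $i_0 \in I$ (up to iso of subobjects), and by the uniqueness observation above, $g'$ agrees with the chosen $g_{i_0}$ under this identification. Setting $h \mathrel{:=} z_{i_0} \colon Z \to Z'$ gives $m' \circ h = m_{i_0} \circ z_{i_0} = m$ and $B h \circ g = B z_{i_0} \circ g = g_{i_0} = g'$. Uniqueness of $h$ follows because $m'$ is mono: any $h'$ with $m' \circ h' = m = m' \circ h$ must satisfy $h' = h$. The main technical point to watch is the compatibility check for the wide pullback and the remark that $B$ preserving intersections forces $Bm'$ to be mono, which is what rigidifies the factorisation; everything else is bookkeeping around the universal property of the intersection.
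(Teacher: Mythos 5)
Your proposal is correct and follows essentially the same route as the paper's own proof: both take the wide intersection of all monos through which $f$ factors (a set by well-poweredness), use preservation of intersections by $B$ to recognise $BZ$ as the wide pullback of the $Bm_i$, obtain $g$ from the cone of the $g_i$, and extract the universal property from the limiting cone maps. Your added observations---that each $g_i$ is determined by $m_i$ since $B$ preserves monos, and that uniqueness of $h$ follows from $m'$ being mono---are minor refinements of the same argument.
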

If $\Cat{C}$ is a locally presentable category, 
then it is complete and well-powered~\cite[Remark 1.56]{AR94}. Hence, in that case,
any functor $B \colon \Cat{C} \rightarrow \Cat{C}$ which preserves intersections has a base.
The following lemma will be useful in proofs.

\begin{lemma}\label{lm:nat-base}
Let $\B \colon \Cat{C} \rightarrow \Cat{C}$ be a functor that has a base and that preserves pre-images. 
Let $f \colon S \rightarrow \B X$ and
$h \colon X \rightarrow Y$ be morphisms, let
$(Z,g,m)$
be the base of $f$
and let $e \colon Z \rightarrow W, m' \colon W \rightarrow
Y$ be the (strong epi, mono)-factorisation of $h \circ m$. 
Then $(W,Be \circ g, m')$ 
is the base of $\B h \circ f$.
\end{lemma}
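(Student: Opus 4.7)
The plan is to verify the two clauses of the base universal property for $(W, \B e \circ g, m')$ with respect to $\B h \circ f$. The factorisation clause is a direct calculation: since $(Z, g, m)$ is the base of $f$ we have $f = \B m \circ g$, and combined with $h \circ m = m' \circ e$ and functoriality of $\B$ this yields $\B m' \circ (\B e \circ g) = \B(m' \circ e) \circ g = \B h \circ \B m \circ g = \B h \circ f$, with $m'$ a mono as required.

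For the universal part, take an arbitrary pair $g' \colon S \to \B Z'$ and mono $m'' \colon Z' \rightarrowtail Y$ with $\B m'' \circ g' = \B h \circ f$. The key idea is to use pre-image preservation to lift the situation back to $X$, where the universal property of the base of $f$ can be applied. I would form the pullback $P$ of $h$ along $m''$ in $\Cat{C}$, with projections $p_1 \colon P \rightarrowtail X$ (a mono, as a pullback of $m''$) and $p_2 \colon P \to Z'$. Since $\B$ preserves pre-images, $\B P$ is the pullback of $\B h$ along $\B m''$, and the hypothesis $\B h \circ f = \B m'' \circ g'$ induces a unique $u \colon S \to \B P$ with $\B p_1 \circ u = f$ and $\B p_2 \circ u = g'$. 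The equation $\B p_1 \circ u = f$ factors $f$ through $\B$ of a mono, so the universal property of the base $(Z, g, m)$ produces a unique $h' \colon Z \to P$ with $\B h' \circ g = u$ and $p_1 \circ h' = m$.

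The desired mediating morphism $k$ is then obtained as a diagonal fill-in. Using pullback commutativity and $m' \circ e = h \circ m$, we get $m'' \circ (p_2 \circ h') = h \circ p_1 \circ h' = h \circ m = m' \circ e$, so the square with sides $p_2 \circ h'$, $e$, $m''$, $m'$ commutes. Since $e$ is a strong epi and $m''$ is a mono, there is a unique diagonal $k \colon W \to Z'$ with $k \circ e = p_2 \circ h'$ and $m'' \circ k = m'$. Then $\B k \circ \B e \circ g = \B(p_2 \circ h') \circ g = \B p_2 \circ u = g'$, as required. Uniqueness of $k$ reduces to uniqueness of the diagonal: any $k'$ with $m'' \circ k' = m'$ satisfies $m'' \circ k' \circ e = m' \circ e = m'' \circ p_2 \circ h'$, so $k' \circ e = p_2 \circ h'$ since $m''$ is mono, forcing $k' = k$.

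The main conceptual step I expect to be slightly delicate is the pre-image construction: recognising that a datum $(g', m'')$ defined over $Y$ must be pulled back along $h$ to a datum $u$ over $X$ before the base of $f$ can be invoked, and that preservation of pre-images by $\B$ is exactly what makes this lifting possible. Once $u$ and $h'$ are in place, the construction of $k$ and its uniqueness are a routine chase in the (strong epi, mono) factorisation system.
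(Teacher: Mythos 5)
Your proof is correct, and it uses exactly the same three technical ingredients as the paper's proof, in the same order: pull the competitor back along $h$ (using that $\B$ preserves pre-images so the pulled-back square stays a pullback under $\B$), invoke the universal property of the base $(Z,g,m)$ of $f$ to land in the pullback object, and then use the diagonal fill-in of the (strong epi, mono) factorisation system to produce the mediating map into the competitor. The only difference is organisational: the paper first invokes the assumption that $\B$ has a base to obtain \emph{the} base $(U,g',n)$ of $\B h \circ f$, gets a comparison map $j\colon U \to W$ from that base's universal property, and then runs the pullback-plus-diagonal argument to build an inverse $d$ to $j$, concluding $W \cong U$; you instead verify the universal property of $(W, \B e\circ g, m')$ head-on against an arbitrary candidate $(g', m'')$. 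Your version is marginally more self-contained (it never needs the base of $\B h\circ f$ to exist in advance) and your uniqueness argument — reduce to uniqueness of the diagonal, using that $m''$ is mono and $e$ is epi — is clean and complete. No gaps.
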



The $\B$-base provides an elegant way to relate reachability within a coalgebra
to a monotone operator on the (complete) lattice of subobjects of the carrier of the coalgebra.
Moreover, we will see that the least subcoalgebra that contains a given subobject of the
carrier can be obtained via a standard least fixpoint construction. Finally, we will introduce
the notion of prefix closed subobject of a coalgebra, generalising the prefix closedness
condition from Angluin's algorithm. 


By our assumption on $\Cat{C}$ at the end of Section~\ref{sec:prelims}, 
the collection of subobjects $(\Sub(X),\leq)$ 
ordered as usual (cf.~page~\ref{order}) forms a complete lattice. 
%
%
Recall that the meet on $\Sub(X)$ (intersection) is defined via pullbacks.  
In categories
with coproducts, the join $s_1 \vee s_2$ of subobjects 
$s_1,s_2 \in \Sub(X)$ is defined as the mono part of the factorisation of the map
$[s_1,s_2]\colon S_1 + S_2 \to X$, i.e., $[s_1,s_2] = (s_1 \vee s_2) \circ e$ for a strong epi
$e$. In $\Set$, this amounts to taking the union of subsets. 

\begin{wrapfigure}[4]{r}{0pt}
\begin{minipage}{13.5em} \vspace{-2.5em}
\begin{equation}\label{eq:operator-diagram}
\vcenter{
   \xymatrix@R=0.8cm{S \ar[d]_g \ar[r]^{s} & X \ar[d]^\gamma \\
      \B \Gamma(S) \ar[r]^-{\B \Gamma_\gamma^\B(s)} & \B X }
}
\end{equation}
\end{minipage}
\end{wrapfigure}
For a binary join $s_1 \vee s_2$ we denote by $\inlv\colon S_1 \to (S_1 \vee S_2)$ and $\inrv\colon S_2  \to (S_1 \vee S_2)$
the embeddings that exist by $s_i \leq s_1 \vee s_2$ for $i = \{1,2\}$. 
Let us now define the key operator of this section.

\begin{definition}\label{def:gamma}
  Let $\B$ be a functor that has a base, $s \colon S \rightarrowtail X$ a subobject of some $X \in \Cat{C}$ 
  and let $(X,\gamma)$ be a $\B$-coalgebra. Let $(\Gamma(S), g, \Gamma_\gamma^\B(s))$ be the $B$-base of
  $\gamma \circ s$, see Diagram~\eqref{eq:operator-diagram}.
  Whenever $\B$ and $\gamma$ are clear from the context, we write
  $\Gamma(s)$ instead of $\Gamma_\gamma^\B(s)$.
\end{definition}

\begin{lemma}\label{lm:gamma-monotone}
  Let $\B \colon \Cat{C} \to \Cat{C}$ be a functor with a base and let $(X,\gamma)$ be a $\B$-coalgebra. The operator $\Gamma\colon \Sub(X) \to \Sub(X)$ defined by
  $s \mapsto \Gamma(s)$ is monotone.
\end{lemma}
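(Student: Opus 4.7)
The plan is to unpack both the definition of the order on subobjects and the universal property of the base, and then let the latter produce the required mediating morphism.

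Suppose $s_1,s_2 \in \Sub(X)$ with $s_1 \le s_2$, say $s_i\colon S_i \rightarrowtail X$ and $s_1 = s_2 \circ m$ for some $m\colon S_1 \to S_2$. Write $(\Gamma(S_i), g_i, \Gamma(s_i))$ for the $B$-base of $\gamma \circ s_i$, so that by definition $\gamma \circ s_i = B\Gamma(s_i) \circ g_i$ and $\Gamma(s_i)\colon \Gamma(S_i) \rightarrowtail X$ is a mono.

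The first observation is that the base of the ``larger'' map already provides a mono-factorisation of the ``smaller'' map:
\begin{equation*}
\gamma \circ s_1 \;=\; \gamma \circ s_2 \circ m \;=\; B\Gamma(s_2) \circ g_2 \circ m,
\end{equation*}
so $(\Gamma(S_2),\, g_2 \circ m,\, \Gamma(s_2))$ exhibits $\gamma \circ s_1$ as factoring through $B$ applied to the mono $\Gamma(s_2)$. Now I invoke the universal property of the base $(\Gamma(S_1),g_1,\Gamma(s_1))$ of $\gamma \circ s_1$: since $\Gamma(s_2)$ is a mono and $B\Gamma(s_2) \circ (g_2 \circ m) = \gamma \circ s_1$, there exists a unique $h\colon \Gamma(S_1) \to \Gamma(S_2)$ with $\Gamma(s_2) \circ h = \Gamma(s_1)$ (and incidentally $Bh \circ g_1 = g_2 \circ m$, though only the first equation is needed). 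The equation $\Gamma(s_2) \circ h = \Gamma(s_1)$ is exactly the witness that $\Gamma(s_1) \le \Gamma(s_2)$ in $\Sub(X)$, so $\Gamma$ is monotone.

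I do not expect any real obstacle here: the argument is a direct application of the universal property of the base, and monos compose/reflect appropriately so the mediator $h$ is automatically available. The only subtlety worth flagging in the write-up is that one must use the base of the \emph{smaller} subobject $s_1$ (not the larger one) to produce the mediating map, since it is the smaller base which enjoys the universal property with respect to competing mono-factorisations like $(\Gamma(S_2), g_2 \circ m, \Gamma(s_2))$.
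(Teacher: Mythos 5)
Your proof is correct and follows essentially the same route as the paper: factor $\gamma \circ s_1$ through the base of $\gamma \circ s_2$ via the witness $m$ of $s_1 \le s_2$, then invoke the universal property of the base of the smaller map $\gamma \circ s_1$ to obtain the mediating mono-commuting morphism $h$ with $\Gamma(s_2)\circ h = \Gamma(s_1)$. Your remark about which base's universal property to use is exactly the point the paper's proof also relies on.
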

Intuitively, $\Gamma$ computes for a given set of states $S$ the set of ``immediate successors'', i.e., 
the set of states that can be reached by applying $\gamma$ to an element of $S$.
We will see that pre-fixpoints of $\Gamma$ correspond to subcoalgebras. Furthermore,
$\Gamma$ is the key to formulate our notion of closed table in the learning algorithm.

\begin{proposition}\label{prop:subcoalg-base}
Let $s \colon S \rightarrowtail X$ be a subobject and $(X, \gamma) \in \mathsf{Coalg}(\B)$ for $X \in \Cat{C}$
and $\B\colon\Cat{C} \to \Cat{C}$ a functor that has a base. Then $s$ is a subcoalgebra
of $(X, \gamma)$ if and only if $\Gamma(s) \leq s$. Consequently, the collection of
subcoalgebras of a given $\B$-coalgebra forms a complete lattice. 
\end{proposition}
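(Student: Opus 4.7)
The plan is to unfold the definition of $\Gamma(s)$ as a base and use the universal property of the base in both directions, then obtain the lattice statement from monotonicity of $\Gamma$ together with the standard fact that pre-fixpoints of a monotone operator on a complete lattice form a complete lattice.

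For the "only if" direction, assume $s \colon S \rightarrowtail X$ underlies a subcoalgebra, i.e., there exists $\sigma_S \colon S \to BS$ with $\gamma \circ s = Bs \circ \sigma_S$. Since $s$ is a mono, this is a factorisation of $\gamma \circ s$ through $B$ applied to a subobject of $X$. By the universal property of the $B$-base $(\Gamma(S), g, \Gamma_\gamma^\B(s))$ of $\gamma \circ s$, there is a unique $h \colon \Gamma(S) \to S$ with $s \circ h = \Gamma_\gamma^\B(s)$ (and $\B h \circ g = \sigma_S$). The first equation is exactly the witness that $\Gamma(s) \leq s$ in $\Sub(X)$.

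For the "if" direction, assume $\Gamma(s) \leq s$, i.e., there is $h \colon \Gamma(S) \to S$ with $s \circ h = \Gamma_\gamma^\B(s)$. Define $\sigma_S \mathrel{:=} \B h \circ g \colon S \to \B S$. Then
\[
\B s \circ \sigma_S = \B s \circ \B h \circ g = \B(s \circ h) \circ g = \B \Gamma_\gamma^\B(s) \circ g = \gamma \circ s,
\]
using the defining equation of the base in the last step. Hence $s$ is a coalgebra morphism from $(S, \sigma_S)$ to $(X, \gamma)$, so $s$ is a subcoalgebra.

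For the "consequently" part, combine the equivalence just proved with Lemma~\ref{lm:gamma-monotone}: subcoalgebras of $(X,\gamma)$ correspond bijectively to pre-fixpoints $\{s \in \Sub(X) \mid \Gamma(s) \leq s\}$ of a monotone endo-operator on the complete lattice $\Sub(X)$. The pre-fixpoints of any monotone operator on a complete lattice are closed under arbitrary meets and contain the top element $\id_X$ (which trivially satisfies $\Gamma(\id_X) \leq \id_X$), hence form a complete lattice (joins being obtained as meets of upper bounds in the Knaster--Tarski style). The main thing to watch out for is the uniqueness/well-definedness of $h$ in the universal property and checking that the equation $\B s \circ \B h \circ g = \gamma \circ s$ goes through cleanly — but both are immediate from the definition of the base, so there is no real obstacle.
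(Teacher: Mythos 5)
Your proof is correct and follows essentially the same route as the paper's: both directions are the universal property and defining equation of the base, and the lattice statement is the Knaster--Tarski-style observation that pre-fixpoints of the monotone operator $\Gamma$ on the complete lattice $\Sub(X)$ form a complete lattice. (Your coalgebra structure $\sigma_S = \B h \circ g$ is the type-correct form of what the paper writes as $\sigma := j \circ e$, which as printed does not typecheck and clearly intends $\B j \circ e$.)
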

Using this connection, 
reachability of a pointed coalgebra (Definition~\ref{def:reachable}) can be expressed
in terms of the least fixpoint $\lfp$ of an operator defined in terms of $\Gamma$. 
\begin{theorem}\label{thm:reach-base}
	Let $\B\colon\Cat{C} \to \Cat{C}$ be a functor that has a base.
	A pointed $B$-coalgebra $(X,\gamma,x_0)$ is reachable 
	iff $X \cong \lfp(\Gamma \vee x_0)$ (isomorphic as subobjects of $X$, i.e., equal).
\end{theorem}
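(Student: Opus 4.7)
The strategy is to identify $\lfp(\Gamma \vee x_0)$ with the least subcoalgebra of $(X,\gamma)$ containing $x_0$, and then observe that reachability is exactly the statement that this least subcoalgebra is the whole of $X$.

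First I would set up the operator. Consider $x_0\colon 1 \to X$ as a subobject of $X$; this is legitimate because any morphism out of the terminal object is monic. Joining with $x_0$ preserves monotonicity, so by Lemma~\ref{lm:gamma-monotone} the map $\Gamma \vee x_0 \colon \Sub(X) \to \Sub(X)$, $s \mapsto \Gamma(s) \vee x_0$, is monotone on the complete lattice $\Sub(X)$. By Knaster--Tarski, $\lfp(\Gamma \vee x_0)$ exists and coincides with the least prefixpoint, i.e.\ the least $s \in \Sub(X)$ satisfying $\Gamma(s) \vee x_0 \leq s$. Since joins are least upper bounds, this inequality is equivalent to the conjunction $\Gamma(s) \leq s$ and $x_0 \leq s$. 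By Proposition~\ref{prop:subcoalg-base}, $\Gamma(s) \leq s$ is exactly the condition that $s$ is a subcoalgebra of $(X,\gamma)$. Hence $\lfp(\Gamma \vee x_0)$ is the least subcoalgebra of $(X,\gamma)$ containing $x_0$.

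Next I would relate this to the definition of reachability. Since $x_0$ is monic, the data ``$s_0\colon 1 \to S$ with $x_0 = s \circ s_0$'' in Definition~\ref{def:reachable} is equivalent to the inequality $x_0 \leq s$ of subobjects of $X$. Thus reachability of $(X,\gamma,x_0)$ says precisely that every subcoalgebra $s$ with $x_0 \leq s$ is an isomorphism, i.e.\ $s \cong \id_X$ as subobjects.

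Now the two directions are straightforward. For $(\Rightarrow)$, write $s_*$ for $\lfp(\Gamma \vee x_0)$, which by the first paragraph is a subcoalgebra containing $x_0$; reachability applied to $s_*$ gives $s_* \cong \id_X$, i.e.\ $X \cong \lfp(\Gamma \vee x_0)$. For $(\Leftarrow)$, suppose $\lfp(\Gamma \vee x_0) \cong \id_X$, and let $s\colon S \rightarrowtail X$ be any subcoalgebra with $x_0 \leq s$. By minimality of $\lfp(\Gamma \vee x_0)$ among subcoalgebras containing $x_0$, we have $\id_X \cong \lfp(\Gamma \vee x_0) \leq s$, and trivially $s \leq \id_X$, so $s \cong \id_X$, hence $s$ is an isomorphism.

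I expect the main obstacle to be purely bookkeeping: checking carefully that the prefixpoints of $\Gamma \vee x_0$ are exactly the subcoalgebras containing $x_0$ (the step depending on Proposition~\ref{prop:subcoalg-base} together with the universal property of the join $\Gamma(s) \vee x_0$), and justifying that $x_0$ may be treated as a subobject so that Definition~\ref{def:reachable} unambiguously translates into the order-theoretic statement $x_0 \leq s$. Once those are in place, everything else is formal manipulation in the complete lattice $\Sub(X)$.
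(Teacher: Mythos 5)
Your proposal is correct and follows essentially the same route as the paper's proof: both rely on Knaster--Tarski over the complete lattice $\Sub(X)$, Proposition~\ref{prop:subcoalg-base} to identify prefixpoints of $\Gamma$ with subcoalgebras, and the translation of ``$s_0\colon 1 \to S$ with $x_0 = s \circ s_0$'' into $x_0 \leq s$. The only difference is organisational --- you first package $\lfp(\Gamma \vee x_0)$ as the least subcoalgebra containing $x_0$ and then derive both directions formally, whereas the paper argues the two implications directly --- but the mathematical content is identical.
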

This justifies defining the reachable part from an initial state $x_0 \colon 1 \rightarrowtail X$ as the least fixpoint
of the monotone operator $\Gamma \vee x_0$. Standard means of computing the least fixpoint
by iterating this operator then give us a way to compute this subcoalgebra.
Further, $\Gamma$ provides a way to generalise the notion of ``prefixed closedness'' from Angluin's L$^*$ algorithm
to our categorical setting.
\begin{definition}
  Let $s_0,s \in \Sub(X)$ for some $X\in \Cat{C}$ and let $(X,\gamma)$ be a $\B$-coalgebra.
  We call $s$ {\em $s_0$-prefix closed w.r.t.\ $\gamma$} if 
  $s = \bigvee_{i=0}^n s_i$ for some $n \geq 0$ and a collection $\{s_i \mid i = 1,\ldots,n\}$ with 
  $s_{j + 1} \leq  \Gamma(\bigvee_{i=0}^j s_i)$ for all $j$ with $0 \leq j < n$.
\end{definition}

\section{Learning Algorithm}
\label{sec:learning}

We define a general learning algorithm 
for $B$-coalgebras. 
First, we describe the setting, in general and slightly informal terms.
The teacher has
a pointed $B$-coalgebra $(X, \gamma, s_0)$. 
Our task is to `learn' 
a pointed $B$-coalgebra $(S, \hat{\gamma}, \hat{s}_0)$ s.t.:
\begin{itemize}
	\item $(S,\hat{\gamma}, \hat{s}_0)$ is \emph{correct} w.r.t.\ the
	collection $\Phi$ of all tests, i.e., the theory of $(X,\gamma)$
	and $(S,\hat{\gamma})$ coincide on the initial states $s_0$ and $\hat{s}_0$, 
	(Definition~\ref{def:correct-psi});
	\item $(S,\hat{\gamma},\hat{s}_0)$ is minimal w.r.t.\ logical equivalence;
	\item $(S,\hat{\gamma},\hat{s}_0)$ is reachable. 
\end{itemize}
The first point means that the learned coalgebra is `correct', that is,
it agrees with the coalgebra of the teacher on all possible tests from the initial state.
For instance, in case of deterministic automata and their logic in Example~\ref{ex:dfa},
this just means that the language of the learned automaton is the correct one. 

In the learning game, we are only provided limited access to the coalgebra
$\gamma \colon X \rightarrow B{X}$. Concretely, the teacher gives us:
\begin{itemize}
	\item 
	for any subobject $S \rightarrowtail X$ and sub-formula closed subobject $\Psi$ of $\Phi$,
	the composite theory map 
	$
	\begin{tikzcd}
		S \ar[r, tail]
			& X \ar[r, "{\theory{\Psi}{\gamma}}"]
			& \Q \Psi 
	\end{tikzcd}
	$;
	\item for $(S, \hat{\gamma}, \hat{s}_0)$ a pointed coalgebra, whether
	or not it is correct w.r.t.\ the collection $\Phi$ of all tests;
	\item in case of a negative answer to the previous question,
	a \emph{counterexample}, which essentially
	is a subobject $\Psi'$ of $\Phi$ representing some tests on which
	the learned coalgebra is wrong (defined more precisely below);
	\item for a given subobject $S$ of $X$, the `next states'; formally, the computation of the $B$-base of
	the composite arrow 
	$
	\begin{tikzcd}
		S \ar[r,tail]
			& X \ar[r,"\gamma"]
			& B{X} 
	\end{tikzcd}
	$.
\end{itemize}
The first three points correspond respectively to the standard notions of membership query (`filling
in' the table with rows $S$ and columns $\Psi$),
equivalence query and counterexample generation. The last point, about the base,
is more unusual: it does not occur in the standard algorithm, since there a canonical
choice of $(X,\gamma)$ is used, which allows to represent next states in a fixed manner. 
It is required in our setting of an arbitrary coalgebra $(X,\gamma)$. 

In the remainder of this section, we describe the abstract learning algorithm 
and its correctness. First, we describe the basic ingredients needed for the algorithm: tables, closedness,
counterexamples and a procedure to close a given table (Section~\ref{sec:tables}). 
Based on these notions, the actual algorithm is presented (Section~\ref{sec:alg}),
followed by proofs of correctness and termination (Section~\ref{sec:correctness-and-termination}).

\begin{assumption}
	Throughout this section, we assume
	\begin{itemize}
		\item that we deal with coalgebras over the base category $\Cat{C} = \Set$;
		\item a functor $B \colon \Cat{C} \rightarrow \Cat{C}$ that preserves pre-images and wide intersections; 
		\item a category $\Cat{D}$ with an initial object $0$ s.t.\ arrows with domain $0$ are monic;
		\item a functor $L \colon \Cat{D} \rightarrow \Cat{D}$ 
		with an initial algebra $L \Phi \stackrel{\cong}{\rightarrow} \Phi$;
		\item an adjunction $\Pred \dashv \Q  \colon \Cat{C} \leftrightarrows \Cat{D}^\op$,
		and a logic $\D \colon L \Pred \Rightarrow \Pred \B$.
	\end{itemize}
	Moreover, we assume 
	a pointed $B$-coalgebra $(X, \gamma, s_0)$. 
\end{assumption}
\begin{remark}\label{rem:assumption-disc}
  We restrict to $\Cat{C} = \Set$, but see it as a key contribution
  to state the algorithm in categorical terms: the assumptions
  cover a wide class of functors on $\Set$, which is the main direction of generalisation.
  Further, the categorical approach will enable future generalisations. 
  The assumptions on the category $\Cat{C}$ are: it is complete,
  well-powered and satisfies that for all (strong) epis $q\colon S \to \overline{S} \in \Cat{C}$ 
  and all monos $i \colon S' \to S$ such that $q \circ i$ is mono there
  is a morphism $q^{-1} \colon \overline{S} \to S$ such that (i) $q \circ q^{-1} = \id$ and $q^{-1} \circ q \circ i = i$.
\end{remark}

\subsection{Tables and counterexamples}\label{sec:tables}

\begin{definition}
	A \emph{table} is a pair $(S \stackrel{s}{\rightarrowtail} X, \Psi \stackrel{i}{\rightarrowtail} \Phi)$
	consisting of a subobject $s$ of $X$ and a subformula-closed subobject $i$ of $\Phi$. 
\end{definition} 
To make the notation a bit lighter, we sometimes refer to a table by $(S,\Psi)$, using $s$
and $i$ respectively to refer to the actual subobjects. 
%
%
%
The pair $(S,\Psi)$ represents `rows' and `columns' respectively, in the table;
the `elements' of the table are given abstractly by the map $\theory{\Psi}{\gamma} \circ s$.
In particular, if $\Cat{C} = \Cat{D} = \Set$ and $\Q = 2^{-}$, then this is a map
$S \rightarrow 2^\Psi$, assigning a Boolean value to every pair of a row (state) and a column (formula).

\begin{wrapfigure}[4]{r}{0pt}
\begin{minipage}{18em} \vspace{-3.3em}
\begin{equation}\label{eq:closed-diagram}
	\begin{tikzcd}
		S \ar[r, "s", tail] &
		X \ar[r, "\theory{}{\gamma}"] &
		\Q \Psi  \\
		\Gamma(S) \ar[r, "{\Gamma(s)}"'] \ar[u, "k"]
			& X \ar[ur,"\theory{}{\gamma}"']
			&
	\end{tikzcd}
\end{equation}
\end{minipage}
\end{wrapfigure}
For the definition of closedness, we use the operator $\Gamma(S)$ from Definition~\ref{def:gamma},
which characterises the successors of a subobject $S \rightarrowtail X$. 
\begin{definition}\label{def:closed}
	A table $(S,\Psi)$ is \emph{closed} if there exists a map
	$k \colon \Gamma(S) \rightarrow S$ such that Diagram~\eqref{eq:closed-diagram} commutes. 
	A table $(S,\Psi)$ is \emph{sharp} if the composite map
	$
	\begin{tikzcd}
		S \ar[r, "s"] &
		X \ar[r, "{\theory{}{\gamma}}"] &
	 	\Q \Psi
	\end{tikzcd}
	$
	is monic.
\end{definition}
Thus, a table $(S,\Psi)$ is closed if all the successors of states (elements of $\Gamma(S)$) are already
represented in $S$, up to equivalence w.r.t.\ the tests in $\Psi$. In other terms,
the rows corresponding to successors of existing rows are already in the table. 
Sharpness amounts to minimality w.r.t.\ logical equivalence: every row has a unique value. 
The latter will be an invariant of the algorithm (Theorem~\ref{thm:invariant}).

\begin{wrapfigure}[4]{r}{0pt}
\begin{minipage}{17em} \vspace{-3.3em}
\begin{equation}\label{eq:conj}
	\begin{tikzcd}
	S \ar[r, "s", tail] \ar[d, "\hat{\gamma}"'] &
	X \ar[r, "\gamma"] &
	B X  \ar[d, "B\theory{}{\gamma}"] \\
	B S \ar[r, "Bs"'] &
	B X \ar[r, "B\theory{}{\gamma}"'] &
	B \Q \Psi
	\end{tikzcd}
	\end{equation}
\end{minipage}
\end{wrapfigure}
A \emph{conjecture} is a coalgebra on $S$, which is not quite a subcoalgebra of $X$: instead,
it is a subcoalgebra `up to equivalence w.r.t.\ $\Psi$', that is, the successors agree
up to logical equivalence. 
\begin{definition}\label{def:conjecture}
	Let $(S,\Psi)$ be a table. 
	A coalgebra structure $\hat{\gamma} \colon S \rightarrow B S$ is called a \emph{conjecture} (for $(S,\Psi)$) if
	Diagram~\eqref{eq:conj} commutes.
\end{definition}
It is essential to be able to construct a conjecture from a closed table. The following, stronger 
result is a variation of Proposition~\ref{prop:subcoalg-base}.
\begin{theorem}\label{thm:conjecture}
	A sharp table is closed iff there 
	exists a conjecture for it. 
	Moreover, if the table is sharp and $B$ preserves monos, then this conjecture is unique. 
\end{theorem}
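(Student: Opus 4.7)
The plan is to prove the two directions of the biconditional separately, then uniqueness. The central tools are the base $(\Gamma(S), g, \Gamma(s))$ of $\gamma \circ s$ (satisfying $B\Gamma(s) \circ g = \gamma \circ s$) and Lemma~\ref{lm:nat-base}. Since $B$ preserves pre-images by standing assumption it preserves monos; together with sharpness this makes $B(\theory{}{\gamma} \circ s)$ monic, which is the glue of the whole argument.

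For the forward direction (closed $\Rightarrow$ conjecture), sharpness is not needed. Closedness supplies a map $k \colon \Gamma(S) \to S$ with $\theory{}{\gamma} \circ s \circ k = \theory{}{\gamma} \circ \Gamma(s)$, so I would simply set $\hat{\gamma} := Bk \circ g$. A short diagrammatic substitution
$B(\theory{}{\gamma} \circ s) \circ Bk \circ g = B(\theory{}{\gamma} \circ \Gamma(s)) \circ g = B\theory{}{\gamma} \circ B\Gamma(s) \circ g = B\theory{}{\gamma} \circ \gamma \circ s$
verifies that $\hat{\gamma}$ is a conjecture in the sense of Definition~\ref{def:conjecture}.

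For the backward direction, I would apply Lemma~\ref{lm:nat-base} with $h = \theory{}{\gamma}$: letting $(e', m'')$ be the strong-epi/mono factorisation of $\theory{}{\gamma} \circ \Gamma(s)$, the base of $B\theory{}{\gamma} \circ \gamma \circ s$ becomes $(W, Be' \circ g, m'')$ for some object $W$. The conjecture equation provides an alternative factorisation $B\theory{}{\gamma} \circ \gamma \circ s = B(\theory{}{\gamma} \circ s) \circ \hat{\gamma}$ through the mono $B(\theory{}{\gamma} \circ s)$. The universal property of the base then yields a map $h \colon W \to S$ with $(\theory{}{\gamma} \circ s) \circ h = m''$, and setting $k := h \circ e'$ witnesses closedness since $\theory{}{\gamma} \circ s \circ k = m'' \circ e' = \theory{}{\gamma} \circ \Gamma(s)$. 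Uniqueness is then an immediate cancellation: any two conjectures $\hat{\gamma}_1, \hat{\gamma}_2$ agree after post-composition by the mono $B(\theory{}{\gamma} \circ s)$, so they must coincide. The main obstacle is the backward direction, where one must recognise that the conjecture equation is exactly the right kind of mono-factorisation to feed into the universality of the base, with sharpness playing the critical role of making $\theory{}{\gamma} \circ s$, and hence its image under $B$, monic.
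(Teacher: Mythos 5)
Your proposal is correct and follows essentially the same route as the paper: the forward direction builds $\hat{\gamma} = Bk \circ g$ from the base of $\gamma \circ s$ and the closedness witness $k$, and the backward direction uses Lemma~\ref{lm:nat-base} to identify the base of $B\theory{}{\gamma} \circ \gamma \circ s$ as the mono part of the factorisation of $\theory{}{\gamma} \circ \Gamma(s)$, then invokes the universal property against the alternative factorisation through the (sharpness-supplied) mono $\theory{}{\gamma} \circ s$ given by the conjecture equation. Your uniqueness argument by cancelling the mono $B(\theory{}{\gamma}\circ s)$ is also the standard one; no gaps.
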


\begin{wrapfigure}[5]{r}{0pt}
\begin{minipage}{17em} \vspace{-2.5em}
\begin{equation}\label{eq:correct-diagram}
\begin{tikzcd}
	& X \ar[dr, "{\theory{}{\gamma}}"] & \\
1 \ar[r, "{\hat{s}_0}"', tail] \ar[ur, "s_0", tail]  &
	S \ar[r, "{\theory{}{\hat{\gamma}}}"']
		& \Q \Psi
\end{tikzcd}
\end{equation}	
\end{minipage}
\end{wrapfigure}
\noindent Our goal is to learn a pointed coalgebra which is correct w.r.t.\ 
all formulas. To this aim we ensure correctness w.r.t.\ an in\-crea\-sing sequence
of subformula closed collections $\Psi$. 

\begin{definition}\label{def:correct-psi}
	Let $(S,\Psi)$ be a table, and let
	$(S,\hat{\gamma},\hat{s}_0)$ be a pointed $B$-coalgebra on $S$.
We say $(S,\hat{\gamma}, \hat{s}_0)$ is \emph{correct} w.r.t.\ $\Psi$ if Diagram~\eqref{eq:correct-diagram}
commutes.

\end{definition}
All conjectures constructed during the learning algorithm will be correct w.r.t.\ the
subformula closed collection $\Psi$ of formulas under consideration.

\begin{lemma}\label{lm:truth-lemma-tables}
	Suppose $(S,\Psi)$ is closed, and $\hat{\gamma}$ is a conjecture. 
	Then $\theory{\Psi}{\gamma} \circ s = \theory{\Psi}{\hat{\gamma}} \colon S \rightarrow Q\Psi$. 
	If $\hat{s}_0 \colon 1 \rightarrow S$ satisfies
	$s \circ \hat{s}_0 = s_0$ then $(S,\hat{\gamma},\hat{s}_0)$ is correct w.r.t.~$\Psi$. 
\end{lemma}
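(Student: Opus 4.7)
The plan is to show that $\theory{\Psi}{\gamma} \circ s \colon S \to Q\Psi$ satisfies the same universal property that uniquely defines $\theory{\Psi}{\hat{\gamma}}$, namely the characterisation in Lemma~\ref{lm:subf}: a map $t \colon S \to Q\Psi$ equals $\theory{\Psi}{\hat{\gamma}}$ iff $t = Q\sigma \circ \D^\flat_\Psi \circ B t \circ \hat{\gamma}$. The first part will then follow from uniqueness, and the second part is an immediate diagram chase using $s \circ \hat{s}_0 = s_0$.

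Concretely, I would first apply Lemma~\ref{lm:subf} to $(X,\gamma)$ to obtain $\theory{\Psi}{\gamma} = Q\sigma \circ \D^\flat_\Psi \circ B \theory{\Psi}{\gamma} \circ \gamma$. Post-composing with $s$ gives
\begin{equation*}
\theory{\Psi}{\gamma} \circ s \;=\; Q\sigma \circ \D^\flat_\Psi \circ B \theory{\Psi}{\gamma} \circ \gamma \circ s.
\end{equation*}
Next I would use the conjecture condition, i.e.\ the commutativity of Diagram~\eqref{eq:conj}, which gives $B \theory{\Psi}{\gamma} \circ \gamma \circ s = B \theory{\Psi}{\gamma} \circ B s \circ \hat{\gamma}$. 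By functoriality of $B$, the right-hand side equals $B(\theory{\Psi}{\gamma} \circ s) \circ \hat{\gamma}$. Substituting back yields
\begin{equation*}
\theory{\Psi}{\gamma} \circ s \;=\; Q\sigma \circ \D^\flat_\Psi \circ B(\theory{\Psi}{\gamma} \circ s) \circ \hat{\gamma},
\end{equation*}
which is precisely the equation characterising $\theory{\Psi}{\hat{\gamma}}$ as the unique $\Psi$-theory map of the coalgebra $(S,\hat{\gamma})$ via Lemma~\ref{lm:subf}. Hence $\theory{\Psi}{\gamma} \circ s = \theory{\Psi}{\hat{\gamma}}$, proving the first claim.

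For the second claim, I would simply pre-compose the identity just established with $\hat{s}_0 \colon 1 \to S$:
\begin{equation*}
\theory{\Psi}{\hat{\gamma}} \circ \hat{s}_0 \;=\; \theory{\Psi}{\gamma} \circ s \circ \hat{s}_0 \;=\; \theory{\Psi}{\gamma} \circ s_0,
\end{equation*}
using the assumption $s \circ \hat{s}_0 = s_0$. This is exactly the commutativity of Diagram~\eqref{eq:correct-diagram}, so $(S,\hat{\gamma},\hat{s}_0)$ is correct w.r.t.\ $\Psi$. I do not anticipate a real obstacle; the only subtlety is to invoke the $\Psi$-version of the theory-map universal property (Lemma~\ref{lm:subf}) rather than the one for $\Phi$, and to apply functoriality of $B$ to collapse $B\theory{\Psi}{\gamma} \circ Bs$ into $B(\theory{\Psi}{\gamma} \circ s)$ so that the uniqueness step applies verbatim.
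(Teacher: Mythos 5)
Your proof is correct and is essentially the paper's own argument: both establish that $\theory{\Psi}{\gamma}\circ s$ satisfies the defining equation of Lemma~\ref{lm:subf} for the coalgebra $(S,\hat{\gamma})$ --- using the conjecture square to replace $\gamma\circ s$ by $Bs\circ\hat{\gamma}$ --- and then conclude by uniqueness, with the second claim following by precomposition with $\hat{s}_0$. The only difference is presentational (equational chain versus pasted commuting diagram).
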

We next define the crucial notion of \emph{counterexample}
to a pointed coalgebra: a subobject $\Psi'$ of $\Psi$ on which it is `incorrect'.
\begin{definition}
Let $(S,\Psi)$ be a table, and let $(S,\hat{\gamma},\hat{s}_0)$ be a pointed $B$-coalgebra on $S$.
Let $\Psi'$ be a subformula closed subobject of $\Phi$, such that
$\Psi$ is a subcoalgebra of $\Psi'$.
We say $\Psi'$ is a \emph{counterexample (for $(S,\hat{\gamma},\hat{s}_0)$, extending $\Psi$)} if 
$(S,\hat{\gamma},\hat{s}_0)$ is \emph{not} correct w.r.t.\ $\Psi'$.
\end{definition}
The following elementary lemma states that if there are no more counterexamples
for a coalgebra, then it is correct w.r.t.\ the object $\Phi$ of all formulas. 
\begin{lemma}\label{lm:no-more-counter}
	Let $(S,\Psi)$ be a table, and 
	let $(S,\hat{\gamma},\hat{s}_0)$ be a pointed $B$-coalgebra on $S$.
	Suppose that there are no counterexamples for 
	$(S,\hat{\gamma},\hat{s}_0)$ extending $\Psi$. Then
	$(S,\hat{\gamma},\hat{s}_0)$ is correct w.r.t.\ $\Phi$.  
\end{lemma}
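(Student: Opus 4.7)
The plan is to reduce the conclusion to a single instance of the hypothesis by taking $\Psi' = \Phi$. By Lambek's lemma, the initial algebra structure $\alpha \colon L\Phi \to \Phi$ is an isomorphism, so $(\Phi, \alpha^{-1})$ carries a canonical $L$-coalgebra structure. This coalgebra is recursive: a coalgebra-to-algebra map $f \colon (\Phi, \alpha^{-1}) \to (Y, g)$ satisfies $f = g \circ Lf \circ \alpha^{-1}$, equivalently $f \circ \alpha = g \circ Lf$, which is exactly the equation for an $L$-algebra homomorphism $(\Phi, \alpha) \to (Y, g)$; such a map exists and is unique by initiality of $(\Phi, \alpha)$. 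In particular, $\id_\Phi$ is the coalgebra-to-algebra map from $(\Phi, \alpha^{-1})$ to $(\Phi, \alpha)$, and $\alpha^{-1}$ is the unique structure making this so. Hence $\Phi$, together with the inclusion $\id_\Phi$, is a subformula-closed subobject of itself in the sense of Definition~\ref{def:subclosed}.

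Next, I verify that $\Psi$ is a subcoalgebra of $(\Phi, \alpha^{-1})$ via the mono $j \colon \Psi \rightarrowtail \Phi$. Since $j$ is the coalgebra-to-algebra map from $(\Psi, \sigma)$ to $(\Phi, \alpha)$, it satisfies $j = \alpha \circ Lj \circ \sigma$, which rearranges to $\alpha^{-1} \circ j = Lj \circ \sigma$. This is precisely the condition that $j$ is an $L$-coalgebra morphism from $(\Psi, \sigma)$ to $(\Phi, \alpha^{-1})$, so $\Psi$ is a subcoalgebra of $\Phi$.

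Consequently, $\Psi' = \Phi$ is a legitimate candidate in the definition of counterexample extending $\Psi$. The assumption that no counterexample for $(S,\hat{\gamma},\hat{s}_0)$ extending $\Psi$ exists therefore forces $(S,\hat{\gamma},\hat{s}_0)$ to be correct w.r.t.\ $\Phi$, which is the desired conclusion. There is no real obstacle in this argument; the only subtle point is keeping apart the two roles of $\Phi$ as the initial $L$-algebra (used to give semantics to formulas via~\eqref{eq:semantics-logic}) and as a recursive $L$-coalgebra (used to satisfy Definition~\ref{def:subclosed}), which are linked by the Lambek isomorphism $\alpha$.
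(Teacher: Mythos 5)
Your proof is correct and follows the same route as the paper: instantiate the no-counterexample hypothesis at $\Psi'=\Phi$, after checking that $\Phi$ is a subformula-closed subobject of itself and that $\Psi$ is a subcoalgebra of it. The paper simply asserts these two facts, whereas you verify them explicitly via Lambek's lemma and the identity $j=\alpha\circ Lj\circ\sigma$; the extra detail is welcome but the argument is the same.
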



The following describes, for a given table, how to extend it with the successors (in $X$)
of all states in $S$. As we will see below, by repeatedly applying this construction, 
one eventually obtains a closed table. 
\begin{definition}\label{def:closing}
    Let $(S,\Psi)$ be a sharp table. 
	Let $(\overline{S},q,r)$ be the (strong epi, mono)-factorisation 
	of the map $\theory{}{\gamma} \circ (s \vee \Gamma(s))$, as in 
	the diagram:  \vspace{-.3cm}
	$$
	\begin{tikzcd}
		S \vee \Gamma(S) \ar[r,"{s \vee \Gamma(s)}"] \ar[dr,"q"',twoheadrightarrow]
			& X \ar[r,"{\theory{}{\gamma}}"] 
			& \Q \Psi \\
			& \overline{S} \ar[ur,"r"',tail]
			& 
	\end{tikzcd}
	$$
	We define
	$\close(S,\Psi) ~{:=}~ \{ \overline{s} \colon \overline{S} \rightarrowtail X \mid
	\theory{}{\gamma} \circ \overline{s} = r, s \leq \overline{s} \leq s \vee \Gamma(s) \}
	$.
	For each $\overline{s} \in \close(S,\Psi)$ we have $s \leq \overline{s}$
	and thus $s = \overline{s} \circ \kappa$ for some $\kappa\colon S \to \overline{S}$.
	
	\end{definition}
       \begin{lemma}\label{lem:connecting}
         In Definition~\ref{def:closing}, for each $\overline{s} \in \close(S,\Psi)$,
        we have $\kappa = q \circ \inlv$.
       \end{lemma}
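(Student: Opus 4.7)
The plan is a short diagram chase exploiting the factorisation of $\theory{}{\gamma} \circ (s \vee \Gamma(s))$ together with the two inequalities $s \leq \overline{s} \leq s \vee \Gamma(s)$.

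First, I would unpack the subobject inequalities. From $\overline{s} \leq s \vee \Gamma(s)$ I get a morphism $m \colon \overline{S} \to S \vee \Gamma(S)$ with $\overline{s} = (s \vee \Gamma(s)) \circ m$. From $s \leq \overline{s}$ I get the given $\kappa \colon S \to \overline{S}$ with $s = \overline{s} \circ \kappa$; since $\overline{s}$ is mono, $\kappa$ is the unique such morphism.

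The key observation is that $m$ is a section of $q$. Indeed, using the definition of $\close(S,\Psi)$ and the factorisation,
\[
r \circ q \circ m \;=\; \theory{}{\gamma} \circ (s \vee \Gamma(s)) \circ m \;=\; \theory{}{\gamma} \circ \overline{s} \;=\; r,
\]
so, since $r$ is monic, $q \circ m = \id_{\overline{S}}$. With this in hand, I would compute
\[
(s \vee \Gamma(s)) \circ m \circ \kappa \;=\; \overline{s} \circ \kappa \;=\; s \;=\; (s \vee \Gamma(s)) \circ \inlv,
\]
and cancel the mono $s \vee \Gamma(s)$ to obtain $m \circ \kappa = \inlv$. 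Finally, pre-composing with $q$ gives $q \circ \inlv = q \circ m \circ \kappa = \kappa$, which is the required identity.

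There is no real obstacle here; the only point that needs care is producing the section $m$ of $q$, which is where the factorisation system and the fact that $r$ is monic are used. Everything else is monic cancellation against $\overline{s}$ and $s \vee \Gamma(s)$.
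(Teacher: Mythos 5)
Your proof is correct and uses essentially the same ingredients as the paper's: the factorisation $\theory{}{\gamma}\circ(s\vee\Gamma(s))=r\circ q$, the defining condition $\theory{}{\gamma}\circ\overline{s}=r$ on elements of $\close(S,\Psi)$, and cancellation of monomorphisms. The paper's proof is a single equation chain ending with cancellation of the mono $r=\theory{}{\gamma}\circ\overline{s}$, whereas you take a small (harmless) detour by first exhibiting the mediating map $m$ as a section of $q$ and then cancelling the mono $s\vee\Gamma(s)$; both arguments are valid.
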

      We will refer to $\kappa=q \circ \inlv$ as the connecting map
	from $s$ to $\overline{s}$.

\begin{lemma}\label{lem:uglycondition} 
	In Definition~\ref{def:closing}, if there exists $q^{-1}  \colon \overline{S} \rightarrow S \vee \Gamma(S)$ 
	such that $q \circ q^{-1} = \id$ and $q^{-1} \circ q \circ \inlv = \inlv$, then 
	$\close(S,\Psi)$ is non-empty. 
\end{lemma}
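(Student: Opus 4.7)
My plan is to produce an explicit element of $\close(S,\Psi)$ by using the hypothesised section. Concretely, I would define
\[
\overline{s} \;:=\; (s \vee \Gamma(s)) \circ q^{-1} \colon \overline{S} \to X
\]
and then verify the three defining conditions of $\close(S,\Psi)$ in turn: that $\overline{s}$ is a mono (so that it is a genuine subobject of $X$), that $\theory{}{\gamma} \circ \overline{s} = r$, and that $s \leq \overline{s} \leq s \vee \Gamma(s)$.

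The first two conditions are handled by a single computation. Post-composing with $\theory{}{\gamma}$ gives
\[
\theory{}{\gamma} \circ \overline{s} \;=\; \theory{}{\gamma} \circ (s \vee \Gamma(s)) \circ q^{-1} \;=\; r \circ q \circ q^{-1} \;=\; r,
\]
using the factorisation $\theory{}{\gamma} \circ (s \vee \Gamma(s)) = r \circ q$ and the first hypothesis $q \circ q^{-1} = \id$. Since $r$ is a mono (as the mono part of the factorisation) and its equals $\theory{}{\gamma} \circ \overline{s}$, a standard fact (if $fg$ is mono then $g$ is mono) yields that $\overline{s}$ itself is mono, so $\overline{s}$ is a subobject of $X$ with $\theory{}{\gamma} \circ \overline{s} = r$.

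For the bounds, $\overline{s} \leq s \vee \Gamma(s)$ is immediate from the definition since $\overline{s}$ factors through $s \vee \Gamma(s)$ via $q^{-1}$. For $s \leq \overline{s}$, I would exhibit the connecting map $\kappa := q \circ \inlv \colon S \to \overline{S}$ (which by Lemma~\ref{lem:connecting} must be the connecting map) and verify $\overline{s} \circ \kappa = s$. Invoking the second hypothesis $q^{-1} \circ q \circ \inlv = \inlv$ and the defining property $(s \vee \Gamma(s)) \circ \inlv = s$ of the join embedding,
\[
\overline{s} \circ q \circ \inlv \;=\; (s \vee \Gamma(s)) \circ q^{-1} \circ q \circ \inlv \;=\; (s \vee \Gamma(s)) \circ \inlv \;=\; s.
\]
This concludes the verification that $\overline{s} \in \close(S,\Psi)$. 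The only mildly non-obvious step is guessing the definition of $\overline{s}$; once this is in place, the two equations imposed on $q^{-1}$ are precisely tailored to give, respectively, the equation $\theory{}{\gamma}\circ\overline{s} = r$ and the lower bound $s \leq \overline{s}$, so there is no real obstacle.
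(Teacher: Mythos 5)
Your proposal is correct and matches the paper's own proof essentially step for step: the same candidate $\overline{s} := (s \vee \Gamma(s)) \circ q^{-1}$, the same computation $\theory{}{\gamma} \circ \overline{s} = r \circ q \circ q^{-1} = r$, and the same verification $\overline{s} \circ q \circ \inlv = s$ for the lower bound. The only (immaterial) difference is that you deduce monicity of $\overline{s}$ from $\theory{}{\gamma} \circ \overline{s} = r$ being mono, whereas the paper observes directly that $\overline{s}$ is a composite of monos ($q^{-1}$ being a split mono); both are fine.
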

By our assumptions, the hypothesis of Lemma~\ref{lem:uglycondition} is satisfied (Remark~\ref{rem:assumption-disc}),
hence $\close(S,\Psi)$ is non-empty. It is precisely (and only) at this point that we need
the strong condition about existence of right inverses to epimorphisms.

\subsection{The algorithm}\label{sec:alg}

Having defined closedness, counterexamples and 
a procedure for closing a table, we are ready to define the abstract algorithm. 
In the algorithm, the teacher has access to
a function $\counter((S,\hat{\gamma},\hat{s}_0),\Psi)$,
which returns the set of all counterexamples (extending $\Psi$) for the conjecture $(S,\hat{\gamma},\hat{s}_0)$.
If this set is empty, the coalgebra $(S,\hat{\gamma},\hat{s}_0)$ is correct (see Lemma~\ref{lm:no-more-counter}), 
otherwise the teacher picks one of its elements $\Psi'$. 
%
We also make use of $\close(S,\Psi)$, as given in Definition~\ref{def:closing}.

\begin{algorithm}[H]
\caption{Abstract learning algorithm}\label{alg:main}
\begin{algorithmic}[1]
\State $(S \stackrel{s}{\rightarrowtail} X) \gets (1 \stackrel{s_0}{\rightarrowtail} X)$
\State $\hat{s}_0 \gets \id_1$
\State $\Psi \gets 0$
\While {\texttt{true}}
	\While {$(S \stackrel{s}{\rightarrowtail} X,\Psi)$ is not closed} \label{ln:closing}
		\State let $(\overline{S} \stackrel{\overline{s}}{\rightarrowtail} X)
		\in \close(S,\Psi)$, with connecting map $\kappa \colon S \rightarrowtail \overline{S}$
		\State $(S \stackrel{s}{\rightarrowtail} X) \gets 
		(\overline{S} \stackrel{\overline{s}}{\rightarrowtail} X)$
		\State $\hat{s}_0 \gets \kappa \circ \hat{s}_0$
	\EndWhile
	\State let $(S,\hat{\gamma})$ be a conjecture for $(S,\Psi)$  \label{ln:conj}
	\If {$\counter((S,\hat{\gamma},\hat{s}_0),\Psi) = \emptyset$} 
		\State \textbf{return} $(S,\hat{\gamma}, \hat{s}_0)$ \label{ln:ret}
	\Else
		\State $\Psi \gets \Psi'$ for some $\Psi' \in \counter((S,\hat{\gamma},\hat{s}_0),\Psi)$ \label{ln:new-ctr}
	\EndIf
\EndWhile
\end{algorithmic}
\end{algorithm}
The algorithm takes as input the coalgebra $(X,\gamma,s_0)$ (which we fixed throughout this section). 
In every iteration of the outside loop, the table is first closed by repeatedly applying the procedure in Definition~\ref{def:closing}. Then, if the conjecture corresponding to the closed table is correct,
the algorithm returns it (Line~\ref{ln:ret}). Otherwise, a counterexample is
chosen (Line~\ref{ln:new-ctr}), and the algorithm continues. 

\subsection{Correctness and Termination}\label{sec:correctness-and-termination}

Correctness is stated in Theorem~\ref{thm:correctness}. 
It relies on establishing loop invariants: 

\begin{theorem}\label{thm:invariant}
The following is an invariant of both loops in Algorithm~\ref{alg:main}:
\begin{enumerate*}
	\item $(S,\Psi)$ is sharp, 
	\item $s \circ \hat{s}_0 = s_0$, and
	\item $s$ is $s_0$-prefix closed w.r.t.\ $\gamma$.
\end{enumerate*}
\end{theorem}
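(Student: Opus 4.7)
The plan is to prove the three invariants by induction, first checking they hold after initialisation (lines 1--3), then showing each is preserved by the body of the inner ``closing'' loop (lines 6--8) and by the outer update of $\Psi$ on line~\ref{ln:new-ctr}.

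For the base case, after initialisation we have $S=1$, $s=s_0$, $\hat{s}_0=\id_1$ and $\Psi = 0$. Sharpness follows because $\Q$, as a right adjoint $\Cat{C} \to \Cat{D}^{\op}$, sends the initial object $0$ of $\Cat{D}$ (terminal in $\Cat{D}^{\op}$) to a terminal object of $\Cat{C}$; hence $\theory{0}{\gamma} \circ s_0 \colon 1 \to \Q 0$ is an isomorphism and in particular monic. The equation $s \circ \hat{s}_0 = s_0$ is immediate, and $s_0$-prefix closedness holds trivially by taking $n=0$ and $s_0 \mathrel{:=} s_0$ in the definition.

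For the inner loop, suppose the invariants hold for $(S,\Psi)$ and we replace $s$ by some $\overline{s} \in \close(S,\Psi)$, updating $\hat{s}_0$ to $\kappa \circ \hat{s}_0$ where $\kappa = q \circ \inlv$ is the connecting map of Lemma~\ref{lem:connecting}. Sharpness is direct: by Definition~\ref{def:closing}, $\theory{\Psi}{\gamma} \circ \overline{s} = r$, which is mono by construction of the (strong epi, mono)-factorisation. The pointing equation is also routine: $\overline{s} \circ (\kappa \circ \hat{s}_0) = (\overline{s} \circ \kappa) \circ \hat{s}_0 = s \circ \hat{s}_0 = s_0$. The main obstacle is the third invariant, prefix closedness of $\overline{s}$. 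Given a witnessing decomposition $s = \bigvee_{i=0}^n s_i$ with $s_{j+1} \leq \Gamma(\bigvee_{i=0}^j s_i)$, I set $s_{n+1} \mathrel{:=} \overline{s} \wedge \Gamma(s)$; then $s_{n+1} \leq \Gamma(s) = \Gamma(\bigvee_{i=0}^n s_i)$. To conclude $\overline{s} = s \vee s_{n+1}$ I use that $\Sub(X)$ is distributive for $\Cat{C} = \Set$ (in fact a Boolean algebra) and that $s \leq \overline{s} \leq s \vee \Gamma(s)$: then $\overline{s} = \overline{s} \wedge (s \vee \Gamma(s)) = (\overline{s} \wedge s) \vee (\overline{s} \wedge \Gamma(s)) = s \vee s_{n+1}$. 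Hence $\overline{s} = \bigvee_{i=0}^{n+1} s_i$ is an $s_0$-prefix closed decomposition.

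For the outer loop, the only change is replacing $\Psi$ by a counterexample $\Psi'$; both $s$ and $\hat{s}_0$ are untouched, so (2) and (3) are preserved verbatim. For sharpness, the inclusion of subformula closed collections $m \colon \Psi \rightarrowtail \Psi'$ gives, by Lemma~\ref{lm:subf} together with uniqueness of the $\Psi$-theory map, the factorisation $\theory{\Psi}{\gamma} = \Q m \circ \theory{\Psi'}{\gamma}$. Therefore $\theory{\Psi}{\gamma} \circ s = \Q m \circ (\theory{\Psi'}{\gamma} \circ s)$, and since the left-hand composite is monic by the inductive hypothesis, so is $\theory{\Psi'}{\gamma} \circ s$. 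This completes the proof of the invariant for both loops. The principal difficulty is the prefix-closedness argument above, which genuinely uses distributivity of the subobject lattice and is the step where the assumption $\Cat{C} = \Set$ is most visibly exploited.
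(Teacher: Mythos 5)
Your proof is correct and follows essentially the same route as the paper's: the base case via $\Q 0 = 1$, sharpness under closing via the mono part $r$ of the factorisation, the pointing equation via $\overline{s}\circ\kappa = s$, prefix closedness via $s_{n+1} := \overline{s}\wedge\Gamma(s)$ and distributivity of $\Sub(X)$, and sharpness under counterexamples via the factorisation $\theory{\Psi}{\gamma} = \Q m \circ \theory{\Psi'}{\gamma}$ (which the paper isolates as a separate lemma proved by a diagram chase, whereas you derive it from uniqueness of the $\Psi$-theory map --- both are valid). The only cosmetic difference is that you apply distributivity in the form $\overline{s}\wedge(s\vee\Gamma(s)) = (\overline{s}\wedge s)\vee(\overline{s}\wedge\Gamma(s))$ while the paper uses $s\vee(\Gamma(s)\wedge\overline{s}) = (s\vee\Gamma(s))\wedge(s\vee\overline{s})$; these yield the same conclusion.
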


\begin{theorem}\label{thm:correctness}
	If Algorithm~\ref{alg:main} terminates, then it returns a pointed coalgebra 
	$(S,\hat{\gamma},\hat{s}_0)$ which is minimal w.r.t.\ logical equivalence, 
	reachable and correct w.r.t.~$\Phi$. 
\end{theorem}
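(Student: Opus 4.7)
The plan is to unpack what termination at line~\ref{ln:ret} delivers and then verify the three claims separately. Exiting the outer loop means the table $(S,\Psi)$ is closed (we just left the inner loop) and $\counter((S,\hat{\gamma},\hat{s}_0),\Psi)=\emptyset$; by Theorem~\ref{thm:invariant}, $(S,\Psi)$ is sharp, $s\circ\hat{s}_0 = s_0$, and $s$ is $s_0$-prefix closed w.r.t.\ $\gamma$. Since $B$ preserves intersections it preserves monos, so Theorem~\ref{thm:conjecture} furnishes a unique conjecture $\hat{\gamma}$ for the closed, sharp table $(S,\Psi)$.

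Correctness w.r.t.\ $\Phi$ is immediate: the absence of counterexamples is precisely the hypothesis of Lemma~\ref{lm:no-more-counter}. For minimality, sharpness makes $\theory{\Psi}{\gamma}\circ s$ monic; Lemma~\ref{lm:truth-lemma-tables} (applicable since the table is closed) identifies this map with $\theory{\Psi}{\hat{\gamma}}$, and Lemma~\ref{lm:subf} rewrites the latter as $\Q j \circ \theory{\Phi}{\hat{\gamma}}$. Hence $\Q j \circ \theory{\Phi}{\hat{\gamma}}$ is monic and therefore so is $\theory{\Phi}{\hat{\gamma}}$: the coalgebra $(S,\hat{\gamma})$ is minimal w.r.t.\ logical equivalence.

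Reachability is the main obstacle. The plan is to apply Theorem~\ref{thm:reach-base} by showing $\id_S$ is $\hat{s}_0$-prefix closed w.r.t.\ $\hat{\gamma}$. Writing $s = \bigvee_{i=0}^n s_i$ for the $\gamma$-prefix-closed decomposition from the invariant, each $s_i$ factors uniquely as $s_i = s\circ\tilde{s}_i$ with $\tilde{s}_i \colon \tilde{S}_i \rightarrowtail S$ mono, and $\bigvee \tilde{s}_i = \id_S$ because post-composition with the mono $s$ reflects joins in $\Sub(S)$. The delicate step is to verify $\tilde{s}_{j+1} \leq \Gamma^B_{\hat{\gamma}}(\bigvee_{i\leq j}\tilde{s}_i)$ for each $j < n$; once this is in hand, a standard induction along the chain (as in Theorem~\ref{thm:reach-base}) yields $S\cong\lfp(\Gamma^B_{\hat{\gamma}}\vee\hat{s}_0)$ and hence reachability.

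The hard part is exactly this transfer of prefix-closedness from $\gamma$ to $\hat{\gamma}$. The proposed route uses three ingredients: first, the explicit shape of the conjecture derived from closedness, namely $\hat{\gamma} = Bk\circ g$ where $(\Gamma(S),g,\Gamma(s))$ is the $B$-base of $\gamma\circ s$ and $k\colon\Gamma(S)\to S$ is the closedness map of Diagram~\eqref{eq:closed-diagram}; second, Lemma~\ref{lm:nat-base} to identify the $B$-base of $\hat{\gamma}\circ\tilde{t}$ with the one obtained by pushing the base of $\gamma\circ s\circ\tilde{t}$ through $s$; and third, the monicity of $\theory{\Psi}{\hat{\gamma}}$ itself, a consequence of sharpness via Lemma~\ref{lm:truth-lemma-tables}, which ensures that any two states of $S$ agreeing on all formulas of $\Psi$ coincide. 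Intuitively, a successor produced by $\gamma\circ s\circ\tilde{t}$ lands in $\Gamma(S)$ and is then mapped by $k$ to its $\Psi$-equivalent, hence unique, representative in $S$, so the $\hat{\gamma}$-successors of $\tilde{t}$ are precisely the lifts through $s$ of the $\gamma$-successors of $s\circ\tilde{t}$. Carrying out this chase formally, purely in terms of subobjects and factorisations rather than elements, is where the principal technical work lies.
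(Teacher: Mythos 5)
Your treatment of correctness (via Lemma~\ref{lm:no-more-counter}) and of minimality (sharpness of $\theory{\Psi}{\gamma}\circ s$, identified with $\theory{\Psi}{\hat{\gamma}}$ by Lemma~\ref{lm:truth-lemma-tables} and factored through $\theory{\Phi}{\hat{\gamma}}$ by Lemma~\ref{lm:subf}) matches the paper and is, if anything, spelled out in more detail than the paper's one-line remark. The problem is reachability, which is the substance of the theorem, and which your proposal does not actually prove: you reduce it to the claim $\tilde{s}_{j+1}\leq\Gamma^B_{\hat{\gamma}}(\bigvee_{i\leq j}\tilde{s}_i)$ and then explicitly defer the verification (``where the principal technical work lies''). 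That deferred step is not a routine chase. The conjecture $\hat{\gamma}$ agrees with $\gamma$ only up to $\Psi$-theory: closedness gives $\theory{}{\gamma}\circ s\circ k=\theory{}{\gamma}\circ\Gamma(s)$, not $s\circ k=\Gamma(s)$. Consequently your key intuition --- that the $\hat{\gamma}$-successors of $\tilde{t}$ are ``precisely the lifts through $s$ of the $\gamma$-successors of $s\circ\tilde{t}$'' --- is false as stated; the two bases agree only after postcomposition with $\theory{\Psi}{\gamma}$, and Lemma~\ref{lm:nat-base} alone cannot bridge this, since there is no single morphism $h$ relating $\hat{\gamma}\circ\tilde{t}$ and $\gamma\circ s\circ\tilde{t}$ in the form the lemma requires. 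Converting ``equal after applying the theory map'' into an actual inequality of subobjects is exactly where sharpness must be invoked, and that argument is missing.

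The paper avoids your reduction entirely: it does not establish $S\cong\lfp(\Gamma^B_{\hat{\gamma}}\vee\hat{s}_0)$ and does not invoke Theorem~\ref{thm:reach-base}. Instead it verifies Definition~\ref{def:reachable} directly. Given an arbitrary pointed subcoalgebra $j\colon\overline{S}\rightarrowtail S$ of $(S,\hat{\gamma},\hat{s}_0)$, it uses the $s_0$-prefix-closed decomposition $s=\bigvee_{i=0}^n s_i$ \emph{in $X$} (from Theorem~\ref{thm:invariant}) and shows by induction that $s_i\leq\overline{s}$ for all $i$: the induction step chains $\theory{}{\gamma}\circ s_{j+1}\leq\theory{}{\gamma}\circ\Gamma(\bigvee_{i\leq j}s_i)\leq\theory{}{\gamma}\circ\Gamma(\overline{s})\leq\theory{}{\gamma}\circ\overline{s}$ (the last step via Theorem~\ref{thm:conjecture}), and then cancels $\theory{}{\gamma}\circ s$ using sharpness to conclude an honest inequality of subobjects. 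If you want to salvage your route, you would need to carry out essentially this same sharpness-cancellation argument to justify the transfer of prefix-closedness from $(X,\gamma)$ to $(S,\hat{\gamma})$; as written, the proposal has a genuine gap at its central point.
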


%
%


In our termination arguments, we have to make an assumption about the coalgebra which
is to be learned. It does not need to be finite itself, but it should be finite
up to logical equivalence---in the case of deterministic automata, for instance,
this means the teacher has a (possibly infinite) automaton representing
a regular language. To speak about this precisely,
let $\Psi$ be a subobject of $\Phi$. We take a (strong epi, mono)-factorisation of the theory map, i.e., 
$
\theory{\Psi}{\gamma} = \left(
	\xymatrix{X \ar@{->>}[r]^-{e_\Psi}
		& \xvert{X}{\Psi}~ \ar@{>->}[r]^-{m_\Psi}
		& Q\Psi
	}
\right)
$ for some strong epi $e$ and mono $m$. 
We call the object $\xvert{X}{\Psi}$ in the middle the \emph{$\Psi$-logical quotient}.
For the termination result (Theorem~\ref{thm:termination}), $\xvert{X}{\Phi}$
is assumed to have finitely many quotients and subobjects, which
just amounts to finiteness, in $\Set$.

We start with termination of the inner while loop (Corollary~\ref{cor:term-inner}). This
relies on two results:
first, that once the connecting map $\kappa$ is an iso, the table is closed, 
and second, that---under a suitable assumption on the coalgebra $(X,\gamma)$---during execution
of the inner while loop, the map $\kappa$ will eventually be an iso. 

\begin{theorem}\label{thm:kappa-iso-closed}
	Let $(S, \Psi)$ be a sharp table, let $\overline{S} \in \close(S,\Psi)$
	and let $\kappa \colon S \rightarrow \overline{S}$ be the connecting map. 
	If $\kappa$ is an isomorphism, then $(S,\Psi)$ is closed. 
\end{theorem}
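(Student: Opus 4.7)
The plan is to exhibit an explicit candidate for the witness map $k \colon \Gamma(S) \to S$ using the inverse of $\kappa$, and then verify commutativity of the closedness diagram~\eqref{eq:closed-diagram} by a direct diagram chase through the (strong epi, mono)-factorisation that defines $\close(S,\Psi)$.

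First I would recall the data at hand. By Definition~\ref{def:closing}, we have the factorisation $\theory{}{\gamma} \circ (s \vee \Gamma(s)) = r \circ q$ with $q$ strong epi and $r$ mono, and $\overline{s} \in \close(S,\Psi)$ satisfies $\theory{}{\gamma} \circ \overline{s} = r$ together with $s \leq \overline{s} \leq s \vee \Gamma(s)$. By Lemma~\ref{lem:connecting} the connecting map is $\kappa = q \circ \inlv$, and the containment $s \leq \overline{s}$ yields $s = \overline{s} \circ \kappa$. Under the hypothesis that $\kappa$ is an isomorphism, these are the only ingredients I will need.

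Next I would define the candidate witness
\[
k \;=\; \kappa^{-1} \circ q \circ \inrv \colon \Gamma(S) \longrightarrow S,
\]
using the other coproduct embedding $\inrv \colon \Gamma(S) \to S \vee \Gamma(S)$. The verification that $k$ makes~\eqref{eq:closed-diagram} commute is then a short calculation:
\[
\theory{}{\gamma} \circ s \circ k
 = \theory{}{\gamma} \circ \overline{s} \circ \kappa \circ \kappa^{-1} \circ q \circ \inrv
 = \theory{}{\gamma} \circ \overline{s} \circ q \circ \inrv
 = r \circ q \circ \inrv,
\]
where the first step uses $s = \overline{s} \circ \kappa$ and the last uses $\theory{}{\gamma} \circ \overline{s} = r$. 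Since $r \circ q = \theory{}{\gamma} \circ (s \vee \Gamma(s))$ by definition of the factorisation, and $(s \vee \Gamma(s)) \circ \inrv = \Gamma(s)$ by definition of the join embedding, we conclude
\[
\theory{}{\gamma} \circ s \circ k \;=\; \theory{}{\gamma} \circ \Gamma(s),
\]
which is exactly the commutativity required for $(S,\Psi)$ to be closed.

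There is really no serious obstacle here once the right candidate for $k$ is guessed; the only subtlety is that one must use Lemma~\ref{lem:connecting} to get the clean identity $\kappa = q \circ \inlv$ and hence to be able to manipulate $s = \overline{s} \circ \kappa$ symmetrically with the other embedding $\inrv$. Note that sharpness of $(S,\Psi)$ is not needed for this implication, though it is of course used elsewhere to guarantee that $\close(S,\Psi)$ and the connecting map are well-behaved.
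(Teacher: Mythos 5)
Your proof is correct, and it reaches the conclusion by a more direct route than the paper. The paper's own argument assembles the composite $B\kappa^{-1}\circ Bq\circ B\inrv\circ g$ (with $g\colon S\to B(\Gamma(S))$ from the base of $\gamma\circ s$) into a coalgebra structure $\hat\gamma$ on $S$, checks that the resulting diagram~\eqref{eq:conj} commutes, and then invokes the ``conjecture exists $\Rightarrow$ closed'' direction of Theorem~\ref{thm:conjecture}; that direction in turn uses sharpness and the universal property of the base to manufacture the witness $k$. You instead write down the witness $k=\kappa^{-1}\circ q\circ\inrv$ explicitly --- which is exactly the underlying map whose $B$-image the paper uses --- and verify Diagram~\eqref{eq:closed-diagram} by a three-line computation from $s=\overline{s}\circ\kappa$, $\theory{}{\gamma}\circ\overline{s}=r$, $r\circ q=\theory{}{\gamma}\circ(s\vee\Gamma(s))$ and $(s\vee\Gamma(s))\circ\inrv=\Gamma(s)$. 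This buys you two small things: you bypass Theorem~\ref{thm:conjecture} and the base machinery entirely, and you make visible that sharpness plays no role in this implication beyond being baked into the definition of $\close(S,\Psi)$ (whereas the paper's detour genuinely consumes sharpness). One tiny quibble: you do not actually need Lemma~\ref{lem:connecting} ($\kappa=q\circ\inlv$) anywhere in your chase --- the identity $s=\overline{s}\circ\kappa$ from Definition~\ref{def:closing} is all you use --- so the closing remark about that lemma being the ``only subtlety'' slightly overstates its role.
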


\begin{lemma}\label{lm:some-kappa-iso}
	Consider a sequence of sharp tables $(S_i \stackrel{s_i}{\rightarrowtail} X,\Psi)_{i \in \mathbb{N}}$
	such that $s_{i+1} \in \close(S_i, \Psi)$ for all $i$. Moreover,
	let $(\kappa_i \colon S_i \rightarrow S_{i+1})_{i \in \mathbb{N}}$ be the connecting maps (Definition~\ref{def:closing}).
	If the logical quotient $\xvert{X}{\Phi}$ of $X$ 
	has finitely many subobjects, then $\kappa_i$ is an isomorphism for some $i \in \mathbb{N}$. 
\end{lemma}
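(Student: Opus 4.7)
The plan is to push the chain of subobjects $s_i \colon S_i \rightarrowtail X$ down to the logical quotient $\xvert{X}{\Phi}$, where finiteness of the subobject lattice forces the chain to stabilise, and then argue that stabilisation forces $\kappa_i$ to be an isomorphism.

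First I would show that each $s_i$ yields a subobject of $\xvert{X}{\Phi}$. By Lemma~\ref{lm:subf}, $\theory{\Psi}{\gamma} = \Q j \circ \theory{\Phi}{\gamma}$ for the inclusion $j \colon \Psi \rightarrowtail \Phi$. Sharpness of $(S_i,\Psi)$ says $\theory{\Psi}{\gamma} \circ s_i$ is monic; since being monic is inherited by right factors of a composite, $\theory{\Phi}{\gamma} \circ s_i$ is monic as well. Writing $\theory{\Phi}{\gamma} = m_\Phi \circ e_\Phi$ and using that $m_\Phi$ is monic, it follows that $e_\Phi \circ s_i \colon S_i \rightarrowtail \xvert{X}{\Phi}$ is monic for every $i$.

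Second, I would check that the sequence $(e_\Phi \circ s_i)_{i \in \mathbb{N}}$ is increasing in $\Sub(\xvert{X}{\Phi})$. Since $s_{i+1} \in \close(S_i,\Psi)$, we have $s_i \leq s_{i+1}$, witnessed by the connecting map $\kappa_i$ with $s_i = s_{i+1} \circ \kappa_i$. Post-composing with $e_\Phi$ gives $e_\Phi \circ s_i = (e_\Phi \circ s_{i+1}) \circ \kappa_i$, so $e_\Phi \circ s_i \leq e_\Phi \circ s_{i+1}$ as subobjects of $\xvert{X}{\Phi}$. By the hypothesis, $\Sub(\xvert{X}{\Phi})$ is finite, hence this ascending chain stabilises: there exists $i$ for which $e_\Phi \circ s_i$ and $e_\Phi \circ s_{i+1}$ are equal as subobjects.

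Third, at such an index I would argue $\kappa_i$ is an isomorphism. Equality of subobjects gives an iso $\iota \colon S_{i+1} \to S_i$ with $e_\Phi \circ s_i \circ \iota = e_\Phi \circ s_{i+1}$. Then $e_\Phi \circ s_{i+1} \circ (\kappa_i \circ \iota) = e_\Phi \circ s_i \circ \iota = e_\Phi \circ s_{i+1}$, and since $e_\Phi \circ s_{i+1}$ is monic, $\kappa_i \circ \iota = \id$. A symmetric calculation using monicity of $e_\Phi \circ s_i$ gives $\iota \circ \kappa_i = \id$, so $\kappa_i$ is an iso.

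The main obstacle is the first step: carefully justifying that sharpness of the $\Psi$-theory map transfers to monicity in the $\Phi$-logical quotient. Everything else is standard lattice-theoretic reasoning (an ascending chain in a finite lattice stabilises) plus a short diagram chase exploiting that the relevant composites are monic.
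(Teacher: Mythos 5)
Your proof is correct and follows essentially the same route as the paper's: transfer the increasing chain of sharp subobjects along $e_\Phi$ into $\Sub(\xvert{X}{\Phi})$, using $\theory{\Psi}{\gamma} = \Q j \circ \theory{\Phi}{\gamma}$ to establish that each $e_\Phi \circ s_i$ is monic, and let finiteness of the subobject lattice force stabilisation. The only difference is that you spell out the final diagram chase showing that stabilisation makes $\kappa_i$ an isomorphism, a step the paper leaves implicit.
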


\begin{corollary}\label{cor:term-inner}
	If the $\Phi$-logical quotient $\xvert{X}{\Phi}$ has finitely many subobjects, then 
	the inner while loop of Algorithm~\ref{alg:main} terminates.
\end{corollary}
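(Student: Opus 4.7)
The plan is to prove the corollary by contradiction, combining the two immediately preceding results. Suppose some execution of the inner \textbf{while} loop (with $\Psi$ held fixed throughout the loop body) does not terminate. Then the algorithm would construct an infinite chain of tables $(S_i \stackrel{s_i}{\rightarrowtail} X, \Psi)_{i \in \mathbb{N}}$ with $s_{i+1} \in \close(S_i,\Psi)$ at each step, together with the associated connecting maps $\kappa_i \colon S_i \to S_{i+1}$ of Definition~\ref{def:closing}, while $(S_i,\Psi)$ is never closed --- otherwise the loop guard would fail and execution would leave the loop.

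First I would observe that sharpness of the tables is preserved throughout this sequence: by Theorem~\ref{thm:invariant}, sharpness is an invariant of both loops, so every $(S_i,\Psi)$ is sharp. This places us exactly in the hypotheses of Lemma~\ref{lm:some-kappa-iso}. Invoking that lemma, together with the standing assumption that $\xvert{X}{\Phi}$ has finitely many subobjects, yields some index $i$ for which $\kappa_i \colon S_i \to S_{i+1}$ is an isomorphism.

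Finally, I would appeal to Theorem~\ref{thm:kappa-iso-closed}: once $\kappa_i$ is an isomorphism and the table $(S_i,\Psi)$ is sharp, that table must already be closed. This contradicts the assumption that the loop did not exit at iteration $i$, giving the desired termination.

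I do not expect a serious obstacle, since the statement is essentially a direct assembly of Theorem~\ref{thm:kappa-iso-closed} and Lemma~\ref{lm:some-kappa-iso}. The only bookkeeping worth spelling out is that the sequence of tables and connecting maps produced by the inner loop matches the abstract sequence in Lemma~\ref{lm:some-kappa-iso}: the loop body picks an element of $\close(S_i, \Psi)$ and updates $s$ and $\hat{s}_0$ via precisely the connecting map from Definition~\ref{def:closing}, so the two descriptions coincide.
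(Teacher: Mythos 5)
Your proposal is correct and follows essentially the same route as the paper's own proof: sharpness via Theorem~\ref{thm:invariant}, an isomorphic connecting map via Lemma~\ref{lm:some-kappa-iso}, and closedness via Theorem~\ref{thm:kappa-iso-closed}. The only cosmetic difference is that you phrase it as a contradiction while the paper argues directly that the loop guard eventually fails.
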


For the outer loop, 
we assume that $\xvert{X}{\Phi}$ has finitely many quotients,
ensuring that every sequence of 
counterexamples proposed by the teacher is finite. 

\begin{theorem}\label{thm:termination}
	If the $\Phi$-logical quotient $\xvert{X}{\Phi}$ has finitely many quotients
	and finitely many subobjects,
	then Algorithm~\ref{alg:main} terminates. 
\end{theorem}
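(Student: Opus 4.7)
The plan is to reduce termination of the whole algorithm to two facts: (i) each invocation of the inner \texttt{while} loop terminates, which is Corollary~\ref{cor:term-inner}; and (ii) the outer \texttt{while} loop runs only finitely many times, which I will establish by exhibiting $|S|$ as a strictly increasing, bounded measure across outer-loop iterations.

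I first bound $|S|$ uniformly. By the sharpness invariant (Theorem~\ref{thm:invariant}), the composite $\theory{\Psi}{\gamma}\circ s \colon S \to \Q\Psi$ is monic; factoring it through the $\Psi$-logical quotient yields a mono $S \rightarrowtail \xvert{X}{\Psi}$. Lemma~\ref{lm:subf} implies $\theory{\Psi}{\gamma} = \Q j \circ \theory{\Phi}{\gamma}$, so $\xvert{X}{\Psi}$ is a quotient of $\xvert{X}{\Phi}$. In $\Set$ the hypothesis that $\xvert{X}{\Phi}$ has finitely many subobjects forces it to be a finite set, and therefore $|S|\le |\xvert{X}{\Phi}|<\infty$ throughout the run.

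For strict growth of $|S|$ across outer iterations, fix two consecutive iterations producing closed tables $(S_n,\Psi_n)$ and $(S_{n+1},\Psi_{n+1})$ with conjectures $\hat\gamma_n,\hat\gamma_{n+1}$, where $\Psi_{n+1}$ is the counterexample certifying that $\hat\gamma_n$ is not correct w.r.t.\ $\Psi_{n+1}$. Since every effective iteration of the inner loop strictly enlarges $S$, we always have $s_n\le s_{n+1}$; I aim to rule out $S_n=S_{n+1}$. If they were equal the inner loop did not execute, meaning $(S_n,\Psi_{n+1})$ was already closed. Sharpness of $(S_n,\Psi_n)$ transfers to $(S_n,\Psi_{n+1})$ because $\theory{\Psi_n}{\gamma}\circ s_n = \Q i_n \circ (\theory{\Psi_{n+1}}{\gamma}\circ s_n)$, so the latter is mono as well; since $B$ preserves monos (via its preservation of intersections), Theorem~\ref{thm:conjecture} produces a unique conjecture $\hat\gamma_{n+1}$ for $(S_n,\Psi_{n+1})$. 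Applying $B(\Q i_n)$ to the defining equation $B\theory{\Psi_{n+1}}{\gamma}\circ Bs_n \circ \hat\gamma_{n+1} = B\theory{\Psi_{n+1}}{\gamma}\circ \gamma\circ s_n$ and using Lemma~\ref{lm:subf} turns it into the conjecture equation for $(S_n,\Psi_n)$, so $\hat\gamma_{n+1}$ is also a conjecture for the latter table; uniqueness forces $\hat\gamma_{n+1}=\hat\gamma_n$. Invoking Lemma~\ref{lm:truth-lemma-tables} on the closed table $(S_n,\Psi_{n+1})$, together with the invariant $s_n\circ \hat s_0^n = s_0$, yields that $(S_n,\hat\gamma_{n+1},\hat s_0^n)$ is correct w.r.t.\ $\Psi_{n+1}$, contradicting the counterexample property for $\hat\gamma_n=\hat\gamma_{n+1}$. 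Hence $|S_{n+1}|>|S_n|$, and combined with the uniform bound the outer loop must terminate.

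I expect the principal obstacle to be the ``transfer of uniqueness'' step: rigorously verifying that any conjecture for the sharp table $(S_n,\Psi_{n+1})$ is automatically a conjecture for $(S_n,\Psi_n)$, so that Theorem~\ref{thm:conjecture} can be applied to the smaller table to identify $\hat\gamma_{n+1}$ with $\hat\gamma_n$. A minor subtlety is that, in $\Set$, the ``finitely many quotients'' clause of the hypothesis coincides with ``finitely many subobjects'' and is not strictly needed by the argument above; it would become the essential ingredient if one instead took the alternative measure of the strict refinement of $\xvert{X}{\Psi_n}$ within the quotient lattice of $\xvert{X}{\Phi}$, which is a more natural invariant for future non-$\Set$ generalisations.
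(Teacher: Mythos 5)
Your argument is correct, but it terminates the outer loop by a different measure than the paper does. The paper tracks the chain of logical quotients $\xvert{X}{\Psi_0} \twoheadleftarrow \xvert{X}{\Psi_1} \twoheadleftarrow \cdots$, all quotients of $\xvert{X}{\Phi}$ (Lemma~\ref{lm:subform-coalg}), and shows in Lemma~\ref{lm:counterexample-iso} that each counterexample forces the connecting epi to be a non-isomorphism; finiteness of the set of quotients of $\xvert{X}{\Phi}$ then bounds the number of outer iterations, regardless of whether $S$ grows. You instead use $|S|$ as the measure: bounded by $|\xvert{X}{\Phi}|$ via sharpness and the factorisation $S \rightarrowtail \xvert{X}{\Psi} \twoheadleftarrow \xvert{X}{\Phi}$, and strictly increasing because if $(S_n,\Psi_{n+1})$ were already closed, then any conjecture for it is (after postcomposition with $B\Q i$, using Lemma~\ref{lm:counter-theory}) a conjecture for $(S_n,\Psi_n)$, hence equal to $\hat{\gamma}_n$ by the uniqueness part of Theorem~\ref{thm:conjecture}, and Lemma~\ref{lm:truth-lemma-tables} would then make $\hat{\gamma}_n$ correct w.r.t.\ $\Psi_{n+1}$ --- contradicting the counterexample property. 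Both proofs derive the same contradiction but from different sufficient conditions: the paper from ``the quotient map is an iso'', you from ``the table is already closed''. Your route is closer to the classical L$^*$ bound on the number of states and, as you observe, needs only the finitely-many-subobjects half of the hypothesis; but it leans on $\Set$-specific cardinality reasoning (a non-invertible mono between finite sets strictly increases cardinality) and on the contrapositive of Theorem~\ref{thm:kappa-iso-closed} to justify that a non-trivial inner-loop pass strictly enlarges $S$ --- a step you should state explicitly. The paper's quotient-chain argument avoids both and is therefore the more portable one outside $\Set$, which is presumably why both finiteness hypotheses are kept in the statement even though they coincide for finite sets.
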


\section{Future Work}
\label{sec:fw}

We showed how duality plays a natural role in automata learning,
through the central connection between states and tests. Based on this foundation, 
we proved correctness and termination of an abstract algorithm for coalgebra
learning. The generality is not so much in the base category (which, for the algorithm, we take to be $\Set$)
but rather in the functor used; we only require a few mild conditions on the functor,
and make no assumptions about its shape. The approach 
is thus considered \emph{coalgebra learning} rather
than automata learning. 

Returning to automata, an interesting direction is to extend the present
work to cover learning of, e.g., non-deterministic or alternating automata~\cite{BolligHKL09,AngluinEF15}
for a regular language. This would require explicitly handling
branching in the type of coalgebra. One promising direction would be to incorporate
the forgetful logics of~\cite{KlinR16}, which are defined within the same 
framework of coalgebraic logic as the current work. 
It is not difficult to define in this setting what it means for a table
to be closed `up to the branching part', stating, e.g., that even though
the table is not closed, all the successors of rows are present as combinations of other rows. 

Another approach would be to integrate monads into our framework, which are also
used to handle branching within the theory of coalgebras~\cite{JacobsSS15}. It is an intriguing question
whether the current approach, which allows to move beyond automata-like examples, 
can be combined with the CALF framework~\cite{heer17:lear}, which is very far in handling branching occuring
in various kinds of automata.


\paragraph{Acknowledgments.} We are grateful to Joshua Moerman, Nick Bezhanishvili, Gerco van Heerdt, Aleks Kissinger and Stefan Milius for valuable
discussions and suggestions. 

\bibliography{bibliography_learning_via_duality.bib}

\newpage

\begin{appendix}

\section{Proofs of Section~\ref{sec:base}}

\begin{proof}[Proof of Proposition~\ref{prop:existence-base}]
	Let $f \colon X \rightarrow B(Y)$. 
	Consider the collection of all pairs of maps $g_k \colon X \rightarrow B(U_k)$, 
	$m_k \colon U_k \rightarrow Y$ such that $B(m_k) \circ g_k = f_k$ and $m_k$ is a subobject,
	indexed by $k \in K$. Let 
	$
		m \colon \bigwedge \{m_k\}_{k \in K} \rightarrow Y
	$
	be the intersection of all the $m_k$ -- this is a (small) set since $\Cat{C}$ is well-powered.
	We abbreviate $\bigwedge \{m_k\}_{k \in K}$ by $I$.
	
	Since $B$ preserves intersections, $B(m) \colon B(I) \rightarrow B(Y)$ is the intersection 
	of all the subobjects $B(m_k)$.
Now the $g_k$'s form a cone over the $B(m_k)$'s, 
so we get a unique $g\colon X \rightarrow B(I)$ from the universal property of the pullback $B(I)$. 

We claim that $(I,g,m)$ is the base of $f$. To see this, 
first of all, note that $i$ is mono, and $B(m) \circ g = f$ by definition of $m$ and $g$. 
Further, if there is any $g' \colon X \rightarrow B(U), m' \colon U \rightarrow Y$ with
	$B(m') \circ g' = f$ and $m'$ monic 
	then it is (up to isomorphism) one of the $g_k$,$m_k$ pairs. 
	Hence, there is the map $x_k \colon I \rightarrow U_k$ in the limiting cone, i.e., $m_k \circ x_k = i$,
	and we have $B(x_k) \circ g = g_k$.
	Finally $x_k$ is unique among such maps, 
	since $B$ preserves monos (as it preserves intersections).
\end{proof}

\begin{proof}[Proof of Lemma~\ref{lm:nat-base}]
By our assumption on $Z$ there exists a morphism
$g: S \to \B Z$ such that $\B m \circ g = f$.
Therefore we have $ \B m' \circ \B e \circ g = \B h \circ f$
which shows that $m'$ is a candidate for the base of $\B h \circ f$.
We still need to check the universal property of the base.
To this aim let $g':S \to \B U$ and $n:U \to Y$ be the base
of $\B h \circ f$:
\begin{equation*}
\begin{tikzcd}
S \ar[rr, "f"] \ar[dr, "g"] \ar[ddrr, "g'"', bend right = 20] & &
\B X \ar[rr, "\B h"] & & \B Y \\
 & \B Z \ar[ur, "\B m"] \ar[rr, "\B e"] & & \B W \ar[ur, "\B m'"] \\
 & & \B U \ar[ur, "\B j"] \ar[uurr, "\B n"', bend right = 20] & & 
\end{tikzcd}
\end{equation*}
By the universal property of the base there is a morphism 
$j \colon U \rightarrow W$ making the lower right diagram commute.
Now, consider the following pullback:
\begin{equation*}
\begin{tikzcd}
P \ar[r, "p_n"] \ar[d, "p_h"'] \ar[dr, phantom, very near start,
"\lrcorner"] &
X \ar[d, "h"] \\
U \ar[r, "n"', tail] & Y 
\end{tikzcd}
\end{equation*}
This is a preimage because $n$ is mono and by assumption on $\B$
we have that this pullback is preserved under application of $\B$.
$S$ forms a cone over the diagram with $S \rightarrow \B X,
S \rightarrow \B U$. So there exists a map from $S$ to the pullback.
\begin{equation*}
\begin{tikzcd}
S \ar[rr, "f"] \ar[dr, "g"] \ar[dddrr, "g'"', bend right = 35]
\ar[ddr, "w"', dashed] & &
\B X \ar[rr, "\B e"] & & \B Y \\
 & \B Z \ar[ur, "\B m"] \ar[rr, "\B e"] \ar[d, "\B k"', near start] & &
\B W \ar[ur, "\B m'"] \ar[ddl, "\B d"', dashed, bend right = 15] \\
 & \B P \ar[uur, "\B p_n"', near start] \ar[dr, "\B p_h"] &
 & & \\
 & & \B U \ar[uur, "\B j"', bend right = 15]
 \ar[uuurr, "\B n"', bend right = 35] & & 
\end{tikzcd}
\end{equation*}
$p_n$ is mono, so $\B p_n \circ w$  is a base factorization. So, we get an arrow
from $\B Z$ to $\B P$, i.e., $k \colon Z \rightarrow P$ such that:
\begin{itemize}
\item[(i)] $\B k \circ g = w$
\item[(ii)] $p_n \circ k = m$
\end{itemize}
Consider the following diagram. According to the diagonal
filling property, we have $\exists ! d \colon W \rightarrow U$
such that (i) $n \circ d = m'$ and (ii) $d \circ e = p_h \circ k$.
\begin{equation*}
\begin{tikzcd}
Z \ar[r, "e", twoheadrightarrow] \ar[d, "k"'] &
W \ar[dd, "m'"] \ar[ddl, "\exists!d", dashed] \\
P \ar[d, "p_h"'] & \\
U \ar[r, "n"', tail] & Y
\end{tikzcd}
\end{equation*}
By the universal property of the base we have $d \circ j =
\mathit{id}_U$. Moreover, $m' \circ j \circ d = n \circ d = m'$, and
because $m'$ is monic we have $j \circ d = \mathit{id}_W$.
\end{proof}

\begin{lemma}\label{lattice1}
If $\Cat{C}$ is complete and well-powered, then for each $X \in \Cat{C}$, we have $\Sub(X)$ has arbitrary meets. 
Consequently, $\Sub(X)$ is a complete lattice.
\end{lemma}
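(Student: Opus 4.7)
The plan is to realize arbitrary meets in $\Sub(X)$ as wide pullbacks, which exist by completeness of $\Cat{C}$, and then to deduce the complete lattice structure by a standard order-theoretic argument. Well-poweredness enters only to guarantee that the indexing involved is legitimately set-sized.

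First I would fix an arbitrary family $\{s_i \colon S_i \rightarrowtail X\}_{i \in I}$ of subobjects of $X$. By well-poweredness, $\Sub(X)$ is a (small) set, so without loss of generality $I$ may be taken small. Completeness of $\Cat{C}$ then gives the wide pullback of the diagram consisting of the $s_i$: an apex $W$ equipped with projections $p_i \colon W \to S_i$ and a universal map $m \colon W \to X$ satisfying $s_i \circ p_i = m$ for every $i$.

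The main step is to verify that $m$ is monic and that $(W, m)$ is the greatest lower bound of the $s_i$'s. For monicity, suppose $m \circ f = m \circ g$; then $s_i \circ p_i \circ f = s_i \circ p_i \circ g$, and since each $s_i$ is mono we obtain $p_i \circ f = p_i \circ g$ for all $i$. The common family $(p_i \circ f)_{i \in I}$ is a cone over the wide pullback diagram (because $s_i \circ p_i \circ f = m \circ f$ is independent of $i$), and both $f$ and $g$ mediate this cone through the limit $W$; uniqueness forces $f = g$. The lower bound property $m \leq s_i$ is witnessed directly by $p_i$. For the greatest lower bound property, any competing lower bound $t \colon T \rightarrowtail X$ comes with maps $t_i \colon T \to S_i$ such that $s_i \circ t_i = t$, which assemble into a cone over the diagram and induce a unique mediating morphism $T \to W$ exhibiting $t \leq m$.

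To conclude that $\Sub(X)$ is a complete lattice I would invoke the standard fact that any poset with arbitrary meets is automatically a complete lattice: the empty meet supplies the top element (which here is $\id_X$), and joins are then obtained by $\bigvee_i s_i = \bigwedge\{t \in \Sub(X) \mid s_i \leq t \text{ for all } i\}$, a legitimate meet thanks again to well-poweredness. There is no real obstacle: the argument is entirely routine once the wide pullback is formed, with completeness supplying existence of the limit and well-poweredness ensuring that the families of subobjects to which we apply the meet construction actually form sets.
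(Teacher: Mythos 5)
Your proof is correct and follows essentially the same route as the paper's: form the wide pullback of the family of subobjects (existence from completeness, set-sized indexing from well-poweredness), check that the induced map into $X$ is monic and is the greatest lower bound via the universal property, and then obtain joins as meets of upper bounds. The only cosmetic difference is that you verify monicity of the meet directly from the limit's uniqueness clause, whereas the paper observes that the pullback projections are monic and composes; both are standard and equivalent here.
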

\begin{proof}[Proof of Lemma~\ref{lattice1}]
Consider some $X \in \Cat{C}$ and an arbitrary family of subobjects $\{m_i \colon S_1 \rightarrow X\}_{i \in I}$. Let $P$
be the pullback with pullback maps $\{p_1 \colon P \rightarrow S_i\}_{i \in I}$. As the $m_i$ are mono, the
$p_i$ are mono as well, so let define $p := m_i \circ p_i \colon P \rightarrow X \in \Sub(X)$. Obviously, we have
$p \leq m_i \forall i \in I$. So, $P$ is a lower bound. To see that $P$ is the greatest lower bound, consider an arbitrary
$P'$ with $p' \colon P' \rightarrow X$ that is a lower bound of the same family of subobjects. By definition of lower bound
we have for each $i \in I$ a map $p_i' \colon P' \rightarrow S_i$ s.t. $m_i \circ p_i' = p$. By universal property of
the (wide) pullback, there exists a unique map $c \colon P' \rightarrow P$ s.t. $p \circ c = p'$, i.e., 
$p' \leq p$. As $P'$ was an arbitrary lower bound, we showed that $P$ is the greatest lower bound.
This finishes the proof of the fact that $\Sub(X)$ has arbitrary meets. Completeness of the lattice can now be proven
in a standard way by defining the join of an arbitrary collection of subobjects as the meet of all upper bounds of this collection.
\end{proof}

\begin{proof}[Proof of Lemma~\ref{lm:gamma-monotone}]
  It is obvious that $\Gamma$ is well-defined. To check monotonicity, we consider the following diagram for subobjects $s:S \to X$ and $s':S' \to X$
  such that $s \leq s'$.
  \[
   \xymatrix{S \ar@{-->}[dddd]_{j} \ar[rd] \ar[rr]^{s} & & X \ar[dd]^\gamma \\
      & \B \Gamma(S) \ar[rd]^{\B \Gamma(s)} \ar@{-->}[dd]_{\exists h} & \\
      & & \B X \\
      & \B \Gamma(S') \ar[ru]_{\B \Gamma(s')} &  \\
      S' \ar[ru] \ar[rr]_{s'} & & X \ar[uu]^\gamma }
  \]
  Here $j$ exists by the definition of $\leq$ and $h$ exists by the universal property of the base of $\gamma \circ s$.
  Therefore we have $\Gamma(s) \leq \Gamma(s')$ as required.
\end{proof}

\begin{proof}[Proof of Proposition~\ref{prop:subcoalg-base}]
\begin{itemize}
\item[$\Rightarrow$] Consider the following diagram 
\begin{equation*}
\begin{tikzcd}
S \ar[rr, "s"] \ar[dd, "\sigma"] \ar[dr, "e"] & & X \ar[d, "\gamma"] \\
 & \B \Gamma(S) \ar[r, "\B \Gamma(s)"] \ar[dl,"\B j"] & \B X \\
 \B S \ar[urr, "\B s", bend right] & & \\
\end{tikzcd}
\end{equation*}
As $s$ is a subcoalgebra there exists $\sigma \colon S \rightarrow \B S$ s.t. the outer square commutes. By the universal
property of the base there exists $j \colon \Gamma(S) \rightarrow S$ s.t. $s \circ j = \Gamma(s)$. In other words,
$\Gamma(s) \leq s$ as required.
\item[$\Leftarrow$] By assumption there exists a $j: \Gamma(S) \to S$ such that 
$s \circ j = \Gamma(s)$. We define a $\B$-coalgebra structure on $S$ by putting  $\sigma :=  j \circ e$. We have to show that the outer square in the above diagram  
commutes, but this is easy
to show because the inner square commutes by definition of the base, the left triangle commutes by definition of 
$\sigma$ and the right one by assumption on $j$.
\end{itemize}
Finally, that the collection of subcoalgebras of $(X,\gamma)$ forms a  complete lattice is now a direct consequence of
the fact that the collection of pre-fixpoints of the monotone operator $\Gamma$ forms a complete lattice (Knaster-Tarski theorem).
\end{proof}

\begin{proof}[Proof of Theorem~\ref{thm:reach-base}]
	Suppose $X \cong \lfp(\Gamma \vee x_0)$, and let $s \colon S \rightarrowtail X$
	be a subcoalgebra together with an arrow $s_0 \colon 1 \to S$ with $x_0 = s \circ s_0$. 
	The latter implies that $x_0 \leq s$. Further, since $S$ is a subcoalgebra, 
	by Proposition~\ref{prop:subcoalg-base} we get $\Gamma(s) \leq s$. 
	Hence $\Gamma(s) \vee x_0 \leq s$, i.e., $s$ is a pre-fixed point of $\Gamma \vee x_0$. 
	By the Knaster-Tarski theorem (using that $\Sub(X)$ is a complete lattice),
	$\lfp(\Gamma \vee x_0)$ is the least pre-fixed point, so it now suffices to prove that $s \leq \lfp(\Gamma \vee x_0)$.
	But this follows easily, since $s$ is a subobject of $X$ and $X \cong \lfp(\Gamma \vee x_0)$.
	
	Conversely, suppose  $(X,\gamma,x_0)$ is reachable. 	
	We have that $\Gamma(\lfp(\Gamma \vee x_0)) \leq
	\Gamma(\lfp(\Gamma \vee x_0)) \vee x_0 = \lfp(\Gamma \vee x_0)$, so by Proposition~\ref{prop:subcoalg-base},
	$\lfp(\Gamma \vee x_0)$ is a subcoalgebra of $(X, \gamma)$. 
	Moreover, we have
	$x_0 \leq \Gamma(\lfp(\Gamma \vee x_0)) \vee x_0 = \lfp(\Gamma \vee x_0)$, so there
	exists a map $s_0 \colon 1 \rightarrow \lfp(\Gamma \vee x_0)$ such that $x_0 = s \circ s_0$,
	where $s \colon \lfp(\Gamma \vee x_0) \rightarrowtail X$ is the inclusion. Hence,
	by definition of reachability, we get that $s$ is an isomorphism. 	
	\qed
\end{proof}

\section{Proofs of Section~\ref{sec:learning}}

\begin{proof}[Proof of Theorem~\ref{thm:conjecture}]
	Given a table $(S,\Psi)$ that is closed, it is straightforward to construct
	a conjecture $\hat{\gamma}$ as composite of $g:S \to \B \Gamma(S)$ 
	and the arrow $B k: \B\Gamma(S) \to S$, where $g$ is part of the 
	base $(\Gamma(S),g,\Gamma(s))$ of $\gamma \circ s$ and $k: \Gamma(S) \to S$
	is the morphism that exists by closedness of $(S,\Psi)$.
    
	For the converse, consider a conjecture $(S, \hat{\gamma})$ for a sharp table  $(S,\Psi)$,
	let $(\Gamma(S),g,\Gamma(s))$ be the base of $\gamma \circ s$ and let $(h: \Gamma(S) \to Y, m:Y \to Q \Psi)$ be the factorisation of
	$\theory{}{\gamma} \circ \Gamma(s)$.
	By Lemma~\ref{lm:nat-base}, as $h$ is epi,  we have that $(Y, Bh \circ g, m)$ is the base of $B\theory{}{\gamma} \circ \gamma \circ s$. The situation
	is depicted in the (commuting) upper square of the diagram below). 
\begin{equation}\label{eq:extended-base}
	\begin{tikzcd}
		S \ar[r,"s",tail] \ar[dr,"g"] \ar[ddr,"\hat{\gamma}"',bend right = 20]
			& X \ar[r,"\gamma"]
			& B{X} \ar[r,"{B\theory{}{\gamma}}"]
			& BQ(\Psi)\\
			& B(\Gamma(S)) \ar[r,"Bh"]
			& B(Y) \ar[ur,"Bm"] \ar[dl,"Bj"]
			& \\
			& B{S} \ar[r,"Bs"']
			& B{X} \ar[uur,"{B\theory{}{\gamma}}"',bend right = 20]
			&
	\end{tikzcd}
\end{equation}
The bigger outer square commutes by the fact that $\hat \gamma$ is assumed to be a conjecture.
As the table is sharp, we have $\theory{}{\gamma} \circ s$ is mono. Therefore the universal property of the base 
yields existence of a morphism $j: Y \to S$ such that $\theory{}{\gamma} \circ s \circ j = m$.
We define $k \mathrel{:=} j \circ h$ and claim that this $k$ is a witness for $(S,\Psi)$, i.e., that $k$ makes the
relevant diagram from Definition~\ref{def:closed} commute. To see this, we calculate:
\[
      \theory{}{\gamma} \circ \Gamma(s) = m \circ h =  \theory{}{\gamma} \circ s \circ j  \circ h =  \theory{}{\gamma} \circ s \circ k .
\]
This finishes the proof.
\qed
\end{proof}

\begin{proof}[Proof of Lemma~\ref{lm:truth-lemma-tables}]	
	The map $\theory{}{\hat{\gamma}}$ is, by definition, the unique map making the following diagram commute.
	$$
	\begin{tikzcd}
		S \ar[rr, "{\theory{}{\hat{\gamma}}}"] \ar[d,"{\hat{\gamma}}"']
			& & \Q \Psi \\
		B{S} \ar[r, "{B\theory{}{\hat{\gamma}}}"']
			& B \Q \Psi \ar[r, "{\rho^\flat_{\Psi}}"'] 
			& \Q L \Psi \ar[u,"d"']
	\end{tikzcd}
	$$
	where $d \colon \Psi \rightarrow L\Psi$ is the coalgebra structure from subformula closedness
	of $\Psi$. 
	Consider the following diagram: 
	\begin{equation*}
	\begin{tikzcd}
	S \ar[r, "s", tail] \ar[d, "\hat{\gamma}"'] &
	X \ar[d, "\gamma"] \ar[rr,"{\theory{}{\gamma}}"] 
	& & \Q \Psi \\
	B{S} \ar[r, "Bs"] &
	B{X} \ar[r, "{B\theory{}{\hat{\gamma}}}"'] &
	B \Q \Psi \ar[r, "{\rho^\flat_{\Psi}}"'] &
	\Q L \Psi  \ar[u,"d"']
	\end{tikzcd}
	\end{equation*}
	The rectangle on the right commutes by definition of $\theory{\Psi}{\gamma}$.
	Together with $\hat{\gamma}$
	being a conjecture, it follows that the outside of the diagram commutes. 
	Since $\theory{\Psi}{\hat{\gamma}}$ is the unique such map,
	we have $\theory{\Psi}{\gamma} \circ s = \theory{\Psi}{\gamma}$.
	\qed
\end{proof}

\begin{proof}[Proof of Lemma~\ref{lm:no-more-counter}]
If there is no counterexample,
	then in particular $\Phi$ is not a counterexample. 
	The object $\Phi$ is subformula-closed subobject of itself,
	and $\Psi$ is a subcoalgebra of $\Phi$. Hence, by definition of
	counterexamples, it must be the case that $(S,\hat{\gamma},\hat{s}_0)$ correct w.r.t.\ $\Phi$.
	\qed
\end{proof}

\begin{proof}[Proof of Lemma~\ref{lem:connecting}]
Let $\overline{s} \in \close(S,\Psi)$. We calculate:
\begin{eqnarray*}
	 \theory{}{\gamma} \circ \overline{s} \circ \kappa & = & \theory{}{\gamma} \circ s = \theory{}{\gamma} \circ (s \vee \Gamma(s)) \circ \inlv = r \circ q \circ \inlv \\
		 & = & \theory{}{\gamma} \circ \overline{s} \circ q \circ \inlv 
\end{eqnarray*}
which implies $\kappa = q \circ \inlv$ as $r =  \theory{}{\gamma} \circ \overline{s}$ is a mono.
\end{proof}

\begin{proof}[Proof of Lemma~\ref{lem:uglycondition}]
    Given the assumption of the lemma we are able to define a morphism $\overline{s}:\overline{S} \to X$ 
    by putting $\overline{s} \mathrel{:=} (s \vee \Gamma(s)) \circ q^{-1}$. Obviously $\overline{s}$ is a mono
    as it is defined as composition of monos. Furthermore, by definition, we have $\overline{s} \leq s \vee \Gamma(s)$.
    To see that $s \leq \overline{s}$, we calculate
    \[
      \overline{s} \circ q \circ \inlv =  (s \vee \Gamma(s)) \circ q^{-1} \circ q \circ \inlv =   (s \vee \Gamma(s)) \circ \inlv = s .
    \]
   Finally, the condition concerning the theory map also follows easily:
   \[
    \theory{}{\gamma} \circ \overline{s} = \theory{}{\gamma} \circ (s \vee \Gamma(s)) \circ q^{-1} = r \circ q \circ q^{-1} = r .
   \]
   This finishes the proof of the lemma.
\end{proof}

We need a few auxiliary lemma's in the proofs below.

\begin{lemma}\label{lm:close-pres-sharp}
	If $(S,\Psi)$ is sharp, then $(\overline{s} \colon 
	\overline{S} \rightarrow X,\Psi)$ is sharp
	for any $\overline{s} \in \close(S,\Psi)$.
\end{lemma}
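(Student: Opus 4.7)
The plan is to read the conclusion straight off Definition~\ref{def:closing}. By the definition of sharpness, we need to verify that the composite $\theory{}{\gamma} \circ \overline{s} \colon \overline{S} \to \Q\Psi$ is monic. But any $\overline{s} \in \close(S,\Psi)$ is, by the very definition of $\close$, constrained to satisfy
\[
\theory{}{\gamma} \circ \overline{s} ~=~ r,
\]
where $r$ is the mono part of the (strong epi, mono)-factorisation of $\theory{}{\gamma} \circ (s \vee \Gamma(s))$. Since $r$ is a monomorphism by construction, the composite on the left is monic, which is exactly the statement that $(\overline{S},\Psi)$ is sharp.

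There is essentially no obstacle: the lemma is built into the definition of $\close(S,\Psi)$. The sharpness hypothesis on $(S,\Psi)$ is inherited from Definition~\ref{def:closing}, which is stated for sharp input tables; it does not need to be explicitly invoked in the argument. The only subtlety worth checking, if one wanted to be extra careful, is that $\overline{s}$ itself is indeed a mono so that $(\overline{S},\Psi)$ qualifies as a table in the first place — but this is part of the definition of $\close(S,\Psi)$ as well (the elements are subobjects of $X$).
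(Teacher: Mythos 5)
Your argument is exactly the paper's: both observe that $\theory{}{\gamma} \circ \overline{s} = r$ holds by the definition of $\close(S,\Psi)$, and that $r$ is monic as the mono part of the (strong epi, mono)-factorisation, which is precisely sharpness of $(\overline{S},\Psi)$. The proposal is correct and takes the same (one-line) route.
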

\begin{proof} 
 	This folllows immediately from Definition~\ref{def:closing}, 
 	since $\theory{}{\gamma} \circ \overline{s} = r$, where $r$ is monic.
 	\qed
\end{proof}

\begin{lemma}\label{lm:counter-theory}
	Let $\Psi$ and $\Psi'$ be subformula closed, with $\Psi$ a subcoalgebra
	of $\Psi'$, witnessed by a mono $i \colon \Psi \rightarrowtail \Psi'$.
	Then we have $\Q i \circ \theory{\Psi'}{\gamma} = \theory{\Psi}{\gamma}$. 
\end{lemma}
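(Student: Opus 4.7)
The plan is to invoke the uniqueness clause in Lemma~\ref{lm:subf}. Since $\theory{\Psi}{\gamma}$ is by definition the unique morphism making diagram~\eqref{eq:thmap-psi} commute, it suffices to show that the composite $\Q i \circ \theory{\Psi'}{\gamma}$ also makes that diagram commute. Concretely, I would verify
\[
\Q i \circ \theory{\Psi'}{\gamma}
\;=\;
\Q\sigma \circ \D^{\flat}_{\Psi} \circ B\bigl(\Q i \circ \theory{\Psi'}{\gamma}\bigr) \circ \gamma,
\]
where $\sigma \colon \Psi \to L\Psi$ is the recursive coalgebra structure coming from subformula closedness of $\Psi$.

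The key ingredients will be two identities. First, naturality of the mate $\D^{\flat} \colon B\Q \Rightarrow \Q L$ applied to the morphism $i \colon \Psi \to \Psi'$ gives
\[
\D^{\flat}_{\Psi} \circ B\Q i \;=\; \Q L i \circ \D^{\flat}_{\Psi'}.
\]
Second, the hypothesis that $\Psi$ is a subcoalgebra of $\Psi'$ via $i$ unpacks (by Definition~\ref{def:subclosed}) to the coalgebra morphism condition
\[
\sigma' \circ i \;=\; L i \circ \sigma,
\]
and applying $\Q$ yields $\Q\sigma \circ \Q L i = \Q i \circ \Q\sigma'$. Chaining these with the defining equation of $\theory{\Psi'}{\gamma}$ gives the desired equality in a short computation: start from the right-hand side, pull $B\Q i$ past $\D^{\flat}$ using naturality, convert $\Q\sigma \circ \Q L i$ into $\Q i \circ \Q\sigma'$ using the subcoalgebra condition, and finally absorb $\Q\sigma' \circ \D^{\flat}_{\Psi'} \circ B\theory{\Psi'}{\gamma} \circ \gamma$ into $\theory{\Psi'}{\gamma}$ by its defining diagram~\eqref{eq:theory-map-logic}.

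There is no real obstacle here: the argument is a straightforward diagram chase. The only subtle point to record carefully is that the uniqueness of $\sigma$ in Definition~\ref{def:subclosed} guarantees that the coalgebra structures used for $\theory{\Psi}{\gamma}$ and $\theory{\Psi'}{\gamma}$ are the canonical ones, and the phrase ``subcoalgebra'' in the statement must be read with respect to these unique structures, so that $\sigma' \circ i = Li \circ \sigma$ is available.
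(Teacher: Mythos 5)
Your proposal is correct and follows essentially the same route as the paper: both establish the identity by showing that $\Q i \circ \theory{\Psi'}{\gamma}$ satisfies the defining commutativity property of $\theory{\Psi}{\gamma}$ from Lemma~\ref{lm:subf}, using naturality of the mate $\D^\flat$ and the $\Q$-image of the coalgebra morphism condition $\sigma' \circ i = Li \circ \sigma$, then concluding by uniqueness. The only trivial slip is that the final absorption step should cite the $\Psi'$-instance of diagram~\eqref{eq:thmap-psi} (Lemma~\ref{lm:subf} applied to $\Psi'$) rather than~\eqref{eq:theory-map-logic}, since $\Psi'$ need not be all of $\Phi$.
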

\begin{proof}[Proof of Lemma~\ref{lm:counter-theory}]
	Let $\sigma' \colon \Psi' \rightarrow L\Psi'$ and $\sigma \colon \Psi \rightarrow L\Psi$ be 
	the coalgebra structures from subformula closedness of $\Psi'$ and $\Psi$ respectively. 
	Consider the following diagram.
	$$
	\begin{tikzcd}
		X \ar[rr, "{\theory{\Psi'}{\gamma}}"] \ar[d,"\gamma"']
			& & \Q(\Psi') \ar[r, "{\Q i}"]
			& \Q \Psi \\
		B{X} \ar[r, "B{\theory{\Psi'}{\gamma}}"]
			& B \Q(\Psi') \ar[r, "{\rho^\flat_{\Psi'}}"] \ar[dr, "{B\Q i}"']
			& \Q L(\Psi')  \ar[u,"{\sigma'}"'] \ar[r, "{\Q Li}"]
			& \Q L \Psi  \ar[u,"\sigma"'] \\
			& & B \Q \Psi \ar[ur, "{\rho^\flat_{\Psi}}"']
	\end{tikzcd}
	$$
	By definition, $\theory{\Psi'}{\gamma}$ is the unique map making the left rectagle
	commute. The (right) square commutes by assumption that $i$ is a coalgebra
	homomorphism, and the (lower) triangle by naturality. 
	Since $\theory{\Psi}{\gamma}$ is the unique map 
	such that $\theory{\Psi}{\gamma} = \Q \sigma \circ \rho^\flat_{\Psi} \circ B\theory{\Psi}{\gamma} \circ \gamma$,
	we have $\theory{\Psi}{\gamma} = \Q i \circ \theory{\Psi'}{\gamma}$. 
	\qed
\end{proof}

\begin{lemma}\label{lm:counter-sharp}
	Suppose $(S,\Psi)$ is sharp, and $\Psi'$ is a counterexample. Then 
	$(S,\Psi')$ is again sharp. 
\end{lemma}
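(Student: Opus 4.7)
The plan is to reduce sharpness of $(S,\Psi')$ to sharpness of $(S,\Psi)$ via the connecting morphism between the two theory maps. Since $\Psi'$ is a counterexample extending $\Psi$, by definition $\Psi$ is a subcoalgebra of $\Psi'$, witnessed by some mono $i\colon \Psi \rightarrowtail \Psi'$ in $\Cat{D}$. Lemma~\ref{lm:counter-theory} then gives
\[
  \Q i \circ \theory{\Psi'}{\gamma} \;=\; \theory{\Psi}{\gamma}.
\]

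Composing on the right with $s\colon S \rightarrowtail X$, I obtain
\[
  \Q i \circ \theory{\Psi'}{\gamma} \circ s \;=\; \theory{\Psi}{\gamma}\circ s,
\]
and the right-hand side is monic by the sharpness hypothesis on $(S,\Psi)$. I would then invoke the elementary categorical fact that if a composite $g \circ f$ is monic, then $f$ is monic: this applies with $g = \Q i$ and $f = \theory{\Psi'}{\gamma}\circ s$, yielding that $\theory{\Psi'}{\gamma}\circ s$ is monic. By Definition~\ref{def:closed}, this is precisely the statement that $(S,\Psi')$ is sharp.

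There is essentially no obstacle here; the argument is a one-line consequence of Lemma~\ref{lm:counter-theory} once one observes that monicity of a composite forces monicity of its right factor. The only mild subtlety worth spelling out is that $\Psi$ being a \emph{subcoalgebra} of $\Psi'$ (with respect to the $L$-coalgebra structures from subformula closedness) is exactly what is needed to apply Lemma~\ref{lm:counter-theory}, and this is built into the definition of counterexample.
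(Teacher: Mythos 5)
Your proof is correct and follows exactly the same route as the paper's: apply Lemma~\ref{lm:counter-theory} to get $\Q i \circ \theory{\Psi'}{\gamma} = \theory{\Psi}{\gamma}$, compose with $s$, and conclude monicity of $\theory{\Psi'}{\gamma}\circ s$ from monicity of the composite. Nothing is missing.
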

\begin{proof} 
	Let $i \colon \Psi \rightarrowtail \Psi'$ be the inclusion of the coalgebra $(\Psi,\sigma)$ into $(\Psi',\sigma')$.
	By Lemma~\ref{lm:counter-theory}, we have 
	$\Q i \circ \theory{\Psi'}{\gamma} = \theory{\Psi}{\gamma}$. 
	Hence $\Q i \circ \theory{\Psi'}{\gamma} \circ s = \theory{\Psi}{\gamma} \circ s$,
	and since $\theory{\Psi}{\gamma} \circ s$ is monic, it follows that 
	$\theory{\Psi'}{\gamma} \circ s$ is monic. 
	\qed
\end{proof}

\begin{proof}[Proof of Theorem~\ref{thm:invariant}]
	To show that each of these is an invariant of both loops, it suffices
	to prove that they hold at once we enter the first iteration of the outer loop,
	and that both loops preserve them (that it holds at the start of each 
	first iteration of the inner loop then follows). 
	\begin{enumerate}
		
		\item 
		(Holds at entry of the outer loop.) At this point, $(S,\Psi) = (S_0, 0)$. 
		Since $\Q$ is a right adjoint, it maps $0$ to the terminal object $\Q 0 = 1$ of $\Cat{C}$. 
		Hence, the map from $S_0=1$ to $\Q \Psi$ is of the form 
		\begin{tikzcd}
		1 \ar[r, "s_0"] &
		X \ar[r, "{\theory{}{\gamma}}"] &
	 	\Q \Psi = 1 \end{tikzcd}
	 	which is an iso, so in particular monic. 
	 	
	 	(Preserved by the inner loop.) This follows from Lemma~\ref{lm:close-pres-sharp}.
	 	
	 	(Preserved by the loop body.) 
	 	If $(S,\Psi)$ is sharp on entry of the body of the outer loop, 
	 	and the inner loop terminates, then $(S,\Psi)$
	 	is again sharp at Line~\ref{ln:conj}. It only remains to show that if $\Psi'$ is
	 	a counterexample (extending $\Psi'$) for a conjecture for $(S,\Psi)$, then $(S,\Psi')$ is 
	 	sharp. This follows, in turn, from Lemma~\ref{lm:counter-sharp}.

		\item (Holds at entry of the outer loop.) Follows immediately
		from the first two lines of the algorithm. 
		
		(Preserved by the inner loop.) Suppose $\hat{s}_0 \circ s = s_0$,
		and let $\overline{s} \in \close(S,\Psi)$. We need to prove that
		$\overline{s}_0 \circ \kappa \circ \hat{s}_0 = s_0$ where $\kappa:S \to \overline{S}$
		is the connecting map.
		Indeed, we have $\overline{s}_0 \circ \kappa \circ \hat{s}_0 = s \circ \hat{s}_0 = s_0$,
		by definition of $\kappa$ and assumption, respectively. 
		
		(Preserved by the outer loop.) This follows immediately from preservation by the inner loop.
		
		\item Clearly the initial configuration $(S_0,0)$ is $s_0$-prefix closed. 
		  Suppose now that $(S,\Psi)$ is a table with $s$ being $s_0$-prefix closed. 
		  We need to check that any $\overline{s} \in \close(S,\Psi)$ is $s_0$-prefix closed as well.
		  By assumption on $(S,\Psi)$ we have
		  $s = \bigvee_{i=0}^n s_i$ for a suitable family of subobjects $s_0,\ldots,s_n$.
		  Let $\overline{s} \in  \close(S,\Psi)$. Then by definition we have
		  $\overline{s} \leq s \vee \Gamma(s)$, so we put $s_{n+1} \mathrel{:=} \Gamma(s) \wedge \overline{s}$. 
		  It is then easy to check that $\overline{s}$ is $s_0$-prefixclosed:
		  \[ \bigvee_{i=0}^{n+1} s_i = \bigvee_{i=0}^n s_i \vee s_{n+1} = s \vee ( \Gamma(s) \wedge \overline{s}) = (s \vee \Gamma(s))\wedge  (s\vee \overline{s})   = \overline{s}\]
		where the last equality follows from  $s \leq \overline{s} \leq s \vee \Gamma(s)$ . By definition
	       we have $s_{n+1} \leq \Gamma(s) =  \Gamma(\bigvee_{i=0}^n s_i)$ as required.
		\end{enumerate}
\end{proof}

\begin{proof}[Proof of Theorem~\ref{thm:correctness}]
  Minimality w.r.t logical equivalence follows from the fact that sharpness of the table is
  maintained throughout. As the algorithm terminated 
  there is no counterexample, which means by Lemma~\ref{lm:no-more-counter} that the coalgebra is correct w.r.t. $\Phi$. 
  For reachability we show that the pointed coalgebra
  that is returned by the algorithm is reachable by showing that
  {\em any} conjecture that is constructed during the run of the algorithm is reachable.
  While running the algorithm we will only encounter conjectures that are built from
  tables that are both sharp and closed.
  Therefore we consider an arbitrary sharp and closed table $(S,\Psi)$ together with the conjecture  $(S,\hat{\gamma})$  that exists
  according to Theorem~\ref{thm:conjecture}. We are going to prove that $(S,\hat{\gamma},\hat{s_0})$ is reachable. 
  
  By Theorem~\ref{thm:invariant} we
  know that $(S,\Psi)$ is $s_0$-prefix closed. This means that 
  $s = \bigvee_{i=0}^n s_i$ for suitable subobjects $s_1, \ldots,s_n \in \Sub(X)$.
  Suppose now that $(\overline{S},\overline{\gamma},\overline{s_0})$ is
  a subcoalgebra of $(S,\hat{\gamma},\hat{s_0})$ with inclusion $j:\overline{S} \to S$
  such that $j \circ \overline{s_0} = \hat{s_0}$.
  We prove by induction on $i$ that $s_i \leq \overline{s}$ for all $i \in \{0,\ldots,n\}$
  and thus $s \leq \overline{s}$ - this will imply $s = \overline{s}$ and thus, as $\overline{s}$ was
  assumed to be an arbitrary (pointed) subcoalgebra, reachability of $(S,\hat{\gamma},\hat{s_0})$.
  \\
  {\noindent \em Case} $i=0$. Then $\overline{s} \circ \overline{s_0} = s  \circ j \circ \overline{s_0} = s \circ \hat{s_0} = s_0$
  and thus $s_0 \leq \overline{s}$ as required. \\
  {\noindent \em Case} $i=j+1$. Then 
  \[
      \theory{}{\gamma} \circ s_{j+1} \leq  \theory{}{\gamma} \circ \Gamma(\bigvee_{i=0}^j s_j) 
           \stackrel{\mbox{\tiny I.H.}}{\leq}  \theory{}{\gamma} \circ \Gamma(\overline{s}) 
          \stackrel{\mbox{\tiny Thm.~\ref{thm:conjecture}}}{\leq} 
            \theory{}{\gamma} \circ \overline{s}
  \]
  where we slightly abuse notation by writing $f \leq g$ for arbitrary morphisms $f:X_1 \to Y$ and $g:X_2 \to Y$ is 
  there exists a morphism $m:X_1 \to X_2$ such that $g \circ m = f$.
  The inequality implies that there is a map $k_{j+1}: S_i \to \overline{S}$ such that 
  $ \theory{}{\gamma} \circ \overline{s} \circ k_{j+1} = \theory{}{\gamma} \circ s_{j+1}$.
  This implies 
  \begin{equation}\label{eq:smallermodtheory}
     \theory{}{\gamma} \circ s \circ j \circ k_{j+1} = \theory{}{\gamma} \circ s_{j+1}
  \end{equation}
  On the other hand, we have $s \circ \mathrm{in}_{j+1} = s_{j+1}$ where $\mathrm{in}_{j+1} = s_{j+1}$
  denotes the inclusion of $s_{j+1}$ into $s$.
  Therefore we have $ \theory{}{\gamma} \circ s \circ \mathrm{in}_{j+1}  = \theory{}{\gamma} \circ s_{j+1}$.
  Together with~(\ref{eq:smallermodtheory}) this implies $ \theory{}{\gamma} \circ s \circ \mathrm{in}_{j+1} =  \theory{}{\gamma} \circ s \circ j \circ k_{j+1}$.
  By sharpness of the table $s$ we obtain $ \mathrm{in}_{j+1} =  j \circ k_{j+1}$ and finally
  $s_{j+1} = s \circ  \mathrm{in}_{j+1} = s \circ j \circ k_{j+1} = s' \circ k_{j+1}$ which shows that
  $s_{j+1} \leq s'$. This finishes the induction proof.
  \qed
\end{proof}

\begin{proof}[Proof of Theorem~\ref{thm:kappa-iso-closed}]
	Suppose $\kappa$ is an isomorphism. Consider the following diagram (using
	the notation from Definition~\ref{def:closing}),
	where $g \colon S \rightarrow B{X}$ is the map which forms
	the base of $\gamma \circ s$ together with $\Gamma(s) \colon \Gamma(S) \rightarrow X$. 
	
\begin{equation*}
\begin{tikzcd}
S \ar[d, "g"'] \ar[r,"s"] &
X \ar[r,"\gamma"] &
B{X} \ar[dddd, "B\theory{}{\gamma}"] \\
B(\Gamma(S)) \ar[d,"{B\inrv}"'] \ar[urr,"B(\Gamma(s))"] \\
B(S \vee \Gamma(S)) \ar[d,"{Bq}"'] \ar[uurr,"{B (s \vee \Gamma(s))}"']\\
B(\overline{S}) \ar[d,"{B\kappa^{-1}}"'] \ar[drr,"Br"] \\
B{S} \ar[r, "Bs"'] &
B{X} \ar[r, "B\theory{}{\gamma}"'] &
BQ(\Psi)
\end{tikzcd}
\end{equation*}
The inner shapes commute, from top to bottom: (1) by definition the base,
(2) by definition of $\inrv$, (3) by definition of $(q,r)$; for the bottom triangle (4), 
we have 
$$
\theory{}{\gamma} \circ s \stackrel{\mbox{\tiny Def. of $\kappa$}}{=} \theory{}{\gamma} \circ \overline{s} \circ \kappa    \stackrel{\mbox{\tiny $\overline{s} \in \close(S,\Psi)$}}{=} 
r \circ \kappa
$$
which suffices since $\kappa$ is an iso. Since the entire diagram commutes,
the coalgebra structure on $S$ gives a conjecture for $(S,\Psi)$. Hence, by Theorem~\ref{thm:conjecture}, 
the table is closed. \qed
\end{proof}

\begin{proof}[Proof of Lemma~\ref{lm:some-kappa-iso}]
	First, observe that the $S_i$'s form an increasing chain of subobjects of $X$.
	Since all these tables $(S_i,\Psi)$ are sharp, they give rise to an increasing chain of subobjects of 
	$Q(\Psi)$, by composition with $\theory{\Psi}{\gamma}$,
	given by $\theory{\Psi}{\gamma} \circ s_i \colon S_i \rightarrow Q(\Psi)$.
	By Lemma~\ref{lm:counter-theory},
	it follows that each $\theory{\Phi}{\gamma} \circ s_i \colon S_i \rightarrow Q \Phi$
	is monic, and we obtain a sequence of subobjects of $\Phi$:
	$$
	\begin{tikzcd}
		S_0 \ar[r,"\kappa_0"] \ar[d,"s_0"]
			& S_1 \ar[r,"\kappa_1"] \ar[d,"s_1"]
			& S_2 \ar[r,"\kappa_2"] \ar[d,"s_2"]
			& \ldots \\
		X \ar[dr,"{\theory{\Phi}{\gamma}}"']
			& X \ar[d,"{\theory{\Phi}{\gamma}}"]
			& X \ar[dl,"{\theory{\Phi}{\gamma}}"] 
			& \ldots \\
			& Q \Phi & & 
	\end{tikzcd}
	$$
	It follows that this induces a chain of subobjects of $\xvert{X}{\Phi}$:
		$$
	\begin{tikzcd}
		S_0 \ar[r,"\kappa_0"] \ar[d,"s_0"]
			& S_1 \ar[r,"\kappa_1"] \ar[d,"s_1"]
			& S_2 \ar[r,"\kappa_2"] \ar[d,"s_2"]
			& \ldots \\
		X \ar[dr,"{e_\Phi}"',twoheadrightarrow]
			& X \ar[d,"{e_\Phi}",twoheadrightarrow]
			& X \ar[dl,"{e_\Phi}",twoheadrightarrow] 
			& \ldots \\
			& \xvert{X}{\Phi} & & 
	\end{tikzcd}
	$$
	By assumption, $\xvert{X}{\Phi}$ has finitely many subobjects, so $\kappa_i$
	must be an isomorphism for some $i$. 
	\qed
\end{proof}

\begin{proof}[Proof of Corollary~\ref{cor:term-inner}]
	The while loop computes a chain of subobjects of $X$ as in Lemma~\ref{lm:some-kappa-iso};
	in particular, each of these forms a sharp table (with $\Psi$),
	since sharpness is an invariant (Theorem~\ref{thm:invariant}). 
	Hence, after a finite number of iterations, $\kappa$ is an iso. 
	By Theorem~\ref{thm:kappa-iso-closed} this implies that $(S,\Psi)$ is closed,
	which means the guard of the while loop is false.
	\qed
\end{proof}

For termination of the outer loop, we need several auxiliary lemmas.
\begin{lemma}\label{lm:subform-coalg}
	Let $(S,\Psi)$ be table, and let $\Psi'$ 
	be a subformula-closed subobject of $\Phi$, such that
	$\Psi$ is a subcoalgebra of $\Psi'$.
	Then there is a unique map $q$ making the following diagram commute:
\begin{equation}\label{eq:psi-quotient}
\begin{tikzcd}
	X \ar[r, "{e_{\Psi'}}", twoheadrightarrow] \ar[dr, "{e_{\Psi}}"', bend right=20, twoheadrightarrow] 
		& \xvert{X}{\Psi'} \ar[r, "{m_{\Psi'}}", tail] \ar[d, "q", dashed, twoheadrightarrow]
		& \Q \Psi' \ar[d, "\Q i"] \\
	& \xvert{X}{\Psi} \ar[r, "{m_{\Psi}}"', tail]
	& \Q \Psi
\end{tikzcd}
\end{equation}
	Moreover, this map $q$ is an epimorphism.
\end{lemma}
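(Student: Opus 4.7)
The plan is to obtain $q$ as the unique diagonal fill-in from the orthogonality between strong epis and monos, using Lemma~\ref{lm:counter-theory} to set up the required commutative square.

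First I would unpack the factorisations: by definition $\theory{\Psi}{\gamma} = m_\Psi \circ e_\Psi$ and $\theory{\Psi'}{\gamma} = m_{\Psi'} \circ e_{\Psi'}$ with $e_\Psi, e_{\Psi'}$ strong epis and $m_\Psi, m_{\Psi'}$ monos. Since $\Psi$ is a subcoalgebra of $\Psi'$, we have an inclusion of $L$-coalgebras $i \colon \Psi \rightarrowtail \Psi'$. Applying Lemma~\ref{lm:counter-theory} yields the identity $\Q i \circ \theory{\Psi'}{\gamma} = \theory{\Psi}{\gamma}$, and substituting the factorisations gives the commutative square
\[
\Q i \circ m_{\Psi'} \circ e_{\Psi'} \;=\; m_\Psi \circ e_\Psi,
\]
i.e., the outer square formed by $e_{\Psi'}$ on the left, $e_\Psi$ on the bottom-left, $\Q i \circ m_{\Psi'}$ on the top-right and $m_\Psi$ on the right.

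Next I would apply the diagonal fill-in property~\eqref{eq:fill-in} from the (strong epi, mono) factorisation system: since $e_{\Psi'}$ is a strong epi and $m_\Psi$ is a mono, there is a unique $q \colon \xvert{X}{\Psi'} \to \xvert{X}{\Psi}$ such that
\[
q \circ e_{\Psi'} = e_\Psi \qquad\text{and}\qquad m_\Psi \circ q = \Q i \circ m_{\Psi'},
\]
which is exactly what is required by Diagram~\eqref{eq:psi-quotient}. This gives both existence and uniqueness of $q$.

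Finally, for the ``moreover'' part, I would note that since $e_\Psi = q \circ e_{\Psi'}$ and $e_\Psi$ is an epi (being a strong epi), the standard fact that the second factor of an epi composite is itself an epi forces $q$ to be epic. I do not expect any serious obstacle here: the only subtlety is making sure the hypotheses of Lemma~\ref{lm:counter-theory} are met (namely that $i$ is a coalgebra morphism between the subformula closed collections), which is precisely the assumption that $\Psi$ is a subcoalgebra of $\Psi'$.
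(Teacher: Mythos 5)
Your proof is correct and follows exactly the paper's argument: the outer square commutes by Lemma~\ref{lm:counter-theory}, $q$ is obtained as the unique diagonal fill-in of the (strong epi, mono) factorisation system, and epi-ness of $q$ follows from $e_{\Psi} = q \circ e_{\Psi'}$ with $e_{\Psi}$ epi. No gaps; your version just spells out the details the paper leaves implicit.
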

\begin{proof}
	The outside of the diagram commutes by Lemma~\ref{lm:counter-theory}. The map
	$q$ arises by the unique fill-in property. That $q$ is an epi follows
	since $e_{\Psi}$ is an epi, and $e_{\Psi} = q \circ e_{\Psi'}$.
	\qed
\end{proof}

\begin{lemma}\label{lm:counterexample-iso}
	Let $(S \stackrel{s}{\rightarrowtail} X,\Psi)$ be a closed table, and $(S,\hat{\gamma}, \hat{s}_0)$
	a pointed coalgebra, such that $(S,\hat{\gamma})$ is a conjecture
	and $s \circ \hat{s}_0 = s_0$. 
	If $\Psi'$ is a counterexample for $(S,\hat{\gamma},\hat{s}_0)$, then
	the map $q \colon \xvert{X}{\Psi'} \rightarrow \xvert{X}{\Psi}$
	from Lemma~\ref{lm:subform-coalg} is \emph{not} an isomorphism. 
\end{lemma}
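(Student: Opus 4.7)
I will argue by contraposition: assume that $q \colon \xvert{X}{\Psi'} \to \xvert{X}{\Psi}$ is an isomorphism, and derive that $\Psi'$ cannot be a counterexample for $(S,\hat{\gamma},\hat{s}_0)$, contradicting the hypothesis. The central idea is that, under this assumption, the coalgebra structure $\hat{\gamma}$---which is a conjecture for $(S,\Psi)$---is automatically a conjecture for the richer table $(S,\Psi')$ as well.

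The first main step is to promote the $\Psi$-conjecture equation
\[ B\theory{\Psi}{\gamma} \circ Bs \circ \hat{\gamma} \;=\; B\theory{\Psi}{\gamma} \circ \gamma \circ s \]
to the corresponding $\Psi'$-equation. Using the factorisations in Lemma~\ref{lm:subform-coalg} one has $\theory{\Psi}{\gamma} = m_\Psi \circ e_\Psi = (m_\Psi \circ q) \circ e_{\Psi'}$ and $\theory{\Psi'}{\gamma} = m_{\Psi'} \circ e_{\Psi'}$. Since $m_\Psi$ is monic and $q$ is an isomorphism, the composite $m_\Psi \circ q$ is monic; and because $B$ preserves monomorphisms (it preserves wide intersections), $B(m_\Psi \circ q)$ is monic too. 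Rewriting the $\Psi$-conjecture equation as
\[ B(m_\Psi \circ q) \circ Be_{\Psi'} \circ Bs \circ \hat{\gamma} \;=\; B(m_\Psi \circ q) \circ Be_{\Psi'} \circ \gamma \circ s \]
and cancelling $B(m_\Psi \circ q)$ yields $Be_{\Psi'} \circ Bs \circ \hat{\gamma} = Be_{\Psi'} \circ \gamma \circ s$; postcomposing with $Bm_{\Psi'}$ delivers the conjecture equation for $(S,\Psi')$.

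For the second step, I observe that the proof of Lemma~\ref{lm:truth-lemma-tables} uses closedness only to ensure that a conjecture exists; the equation $\theory{\Psi}{\gamma} \circ s = \theory{\Psi}{\hat{\gamma}}$ itself follows purely from the conjecture condition and the universal property of the theory map (via uniqueness in diagram~\eqref{eq:thmap-psi}). Applying this equational argument to $\Psi'$ therefore yields $\theory{\Psi'}{\gamma} \circ s = \theory{\Psi'}{\hat{\gamma}}$. Precomposing with $\hat{s}_0$ and using the hypothesis $s \circ \hat{s}_0 = s_0$ gives
\[ \theory{\Psi'}{\gamma} \circ s_0 \;=\; \theory{\Psi'}{\gamma} \circ s \circ \hat{s}_0 \;=\; \theory{\Psi'}{\hat{\gamma}} \circ \hat{s}_0, \]
which is precisely the statement that $(S,\hat{\gamma},\hat{s}_0)$ is correct with respect to $\Psi'$, contradicting the assumption that $\Psi'$ is a counterexample.

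The main obstacle is identifying the right mono to cancel in the first step. Neither $B\theory{\Psi}{\gamma}$ nor $B\Q i$ is monic in general, so direct cancellation in either the $\Psi$- or $\Psi'$-conjecture equation fails. The key observation is the intermediate factorisation through $\xvert{X}{\Psi'}$: only by writing $\theory{\Psi}{\gamma}$ as $(m_\Psi\circ q)\circ e_{\Psi'}$ does one expose the monic factor that $q$ being an isomorphism makes cancellable, while sharing its epi part $e_{\Psi'}$ with $\theory{\Psi'}{\gamma}$ so that the $\Psi'$-equation can be reassembled.
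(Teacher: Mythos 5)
Your proof is correct and follows essentially the same route as the paper: assume $q$ is an isomorphism, show that the conjecture for $(S,\Psi)$ is also a conjecture for $(S,\Psi')$, and then invoke Lemma~\ref{lm:truth-lemma-tables} to conclude correctness w.r.t.\ $\Psi'$, contradicting that $\Psi'$ is a counterexample. The only cosmetic difference is that the paper transports the conjecture square by postcomposing with $Bq^{-1}$ (using $e_{\Psi'} = q^{-1}\circ e_\Psi$), whereas you cancel the mono $B(m_\Psi\circ q)$ and then postcompose with $Bm_{\Psi'}$ --- an equivalent manipulation; your side remark that Lemma~\ref{lm:truth-lemma-tables} does not actually need closedness is accurate and spares you the (easy) appeal to Theorem~\ref{thm:conjecture} to establish that $(S,\Psi')$ is closed.
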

\begin{proof}
	Suppose that $q$ is an iso; we prove that, in that case, 
	$(S,\hat{\gamma},\hat{s}_0)$ is correct w.r.t.\
	$\Psi'$. 
	Let $q^{-1}$ be the inverse of $q$. Since 
	$q \circ e_{\Psi'} = e_\Psi$ we also have $e_{\Psi'} = q^{-1} \circ e_\Psi$. Hence,
	the two shapes on the lower right in the following diagram commute:
	\begin{equation*}
	\begin{tikzcd}
	S \ar[r, "s", tail] \ar[d, "\hat{\gamma}"'] &
		X \ar[r, "\gamma"] &
		B{X} \ar[d, "Be_{\Psi}"'] \ar[ddr, "{Be_{\Psi'}}", bend left=20]
		&
		\\
	B{S} \ar[r, "Bs"'] &
	B{X} \ar[r, "Be_{\Psi}"] \ar[drr, "{Be_{\Psi'}}"', bend right=20] &
	B(\xvert{X}{\Psi}) \ar[dr, "{Bq^{-1}}"'] &
		\\
		& & & B(\xvert{X}{\Psi'})
	\end{tikzcd}
	\end{equation*}	
	The rectangle commutes since $\hat{\gamma}$ is a conjecture for the 
	closed table $(S,\Psi)$. Since the entire diagram commutes,
	it shows that $(S,\hat{\gamma})$ is a conjecture for the closed table 
	$(S,\Psi')$ as well. 
	Together with $s \circ \hat{s}_0 = s_0$, by Lemma~\ref{lm:truth-lemma-tables},
	we obtain that $(S,\hat{\gamma},\hat{s}_0)$ is correct w.r.t.\ $\Psi'$. 
		\qed
\end{proof}
\begin{proof}[Proof of Theorem~\ref{thm:termination}]
	The inner while loop terminates in each iteration 
	of the outer loop by Corollary~\ref{cor:term-inner}.
	The outer loop generates a sequence 
	$\Psi_0, \Psi_1, \Psi_2, \ldots$ of subobjects, such that for each $i$,
	there is a pointed coalgebra $(S_i, \hat{\gamma}, \hat{s}_0)$ such that
	\begin{itemize}
		\item $(S_i \stackrel{s_i}{\rightarrowtail} X, \Psi_i)$ is a closed table,
		\item $(S_i, \hat{\gamma})$ is a conjecture for this table,
		\item $s_i \circ \hat{s}_0 = s_0$, and
		\item $\Psi_{i+1}$ is a counterexample for $(S_i,\hat{\gamma},\hat{s}_0)$. 
	\end{itemize}
	We will show that such a sequence is necessarily finite.	
	
	By the last point and Lemma~\ref{lm:subform-coalg}, for each $i$, 
	there exists a map $q_{i+1,i}$ making the diagram on the left-hand side commute:
	$$
	\begin{tikzcd}
		& X \ar[dl,"{e_{\Psi_i}}"'] \ar[dr,"{e_{\Psi_{i+1}}}"] 
		& \\
		\xvert{X}{\Psi_i}
		& & \xvert{X}{\Psi_{i+1}} \ar[ll,"q_{i+1,i}"']
	\end{tikzcd}
	\qquad
	\begin{tikzcd}
		& X \ar[dl,"{e_{\Psi_i}}"'] \ar[dr,"{e_{\Phi}}"] 
		& \\
		\xvert{X}{\Psi_i}
		& & \xvert{X}{\Phi} \ar[ll,"q_{i}"']
	\end{tikzcd}
	$$
	Moreover, again by Lemma~\ref{lm:subform-coalg}, for each $i$,
	there is a map $q_i$ making the diagram on the right-hand side above commute. 
	For each $i$, we have
	$$
	q_{i+1,i} \circ q_{i+1} \circ e_{\Phi} 
		= q_{i+1,i} \circ e_{i+1} 
		= e_{\Psi_i}
		= q_i \circ e_{\Phi}
	$$
	and since $e_{\Phi}$ is epic, we obtain $q_{i+1, i} \circ q_{i+1} = q_i$. Hence, we get 
	the following sequence of quotients:
$$
\begin{tikzcd}
	& & \xvert{X}{\Phi} \ar[dll, "q_0"', twoheadrightarrow] 
	 \ar[dl, "q_1", twoheadrightarrow] 
	  \ar[d, "q_2", twoheadrightarrow] 
	   \ar[dr, "{\ldots}", twoheadrightarrow] 
	\\
	\xvert{X}{\Psi_0}
	& \xvert{X}{\Psi_1} \ar[l, "q_{1,0}", twoheadrightarrow]
	& \xvert{X}{\Psi_2} \ar[l, "q_{2,1}", twoheadrightarrow]
	& \ldots \ar[l, "q_{3,2}", twoheadrightarrow]
\end{tikzcd}
$$
It follows from Lemma~\ref{lm:counterexample-iso} and the previous assumptions that none of the 
quotients $q_{i+1,i}$ can be an iso.
But since for each $i$,  $\xvert{X}{\Psi_i}$ is a quotient of $\xvert{X}{\Phi}$,
and the latter has only finitely many quotients, the sequence of
counterexamples must be finite. 
\qed
\end{proof}

\end{appendix}

\end{document}